\newcommand{\zerodisplayskips}{%
  \setlength{\abovedisplayskip}{3pt}%
  \setlength{\belowdisplayskip}{3pt}%
  \setlength{\abovedisplayshortskip}{3pt}%
  \setlength{\belowdisplayshortskip}{3pt}}
\appto{\normalsize}{\zerodisplayskips}
\appto{\small}{\zerodisplayskips}
\appto{\footnotesize}{\zerodisplayskips}
\newcommand{\nsuccprec}{\not\mathrel{\mathpalette\succ@prec{\succ\prec}}_{re}}
\newcommand{\succ@prec}[2]{\succ@@prec#1#2}
\newcommand{\succ@@prec}[3]{%
  \vcenter{\m@th\offinterlineskip
    \sbox\z@{$#1#3$}%
    \hbox{$#1#2$}\kern-0.4\ht\z@\box\z@
  }%
}
\newcommand{\revision}[1]{{\color{black}#1}}
\newcommand{\srs}{R} 
\newcommand{\resv}{r} 
\newcommand{\Trec}{\alpha}
\newcommand{\Tdur}{\beta}
\begin{document}
\title{
An STL-based Formulation of Resilience in Cyber-Physical Systems
\thanks{This is the full version of the paper under the same title accepted to FORMATS 2022.}}


\author{
Hongkai Chen\inst{1}
\and 
Shan Lin\inst{1}
\and 
Scott A. Smolka\inst{1} \and 
Nicola Paoletti\inst{2}
}
\authorrunning{H. Chen et al.}
\institute{
Stony Brook University, Stony Brook, USA\\
\email{\{hongkai.chen,shan.x.lin\}@stonybrook.edu}\\
\email{sas@cs.stonybrook.edu}
\and
Royal Holloway, University of London, UK\\
\email{nicola.paoletti@rhul.ac.uk}
}
\maketitle

\vspace{-2ex}
\begin{abstract}

Resiliency is the ability to quickly recover from a violation and avoid future violations for as long as possible. Such a property is of fundamental importance for Cyber-Physical Systems (CPS), and yet, to date, there is no widely agreed-upon formal treatment of CPS resiliency. 
We present an STL-based framework for reasoning about resiliency in CPS in which resiliency has a syntactic characterization in the form of an \emph{STL-based Resiliency Specification} (SRS).  Given an arbitrary STL formula $\varphi$, time bounds $\Trec$ and $\Tdur$, the SRS of $\varphi$, $R_{\Trec,\Tdur}(\varphi)$, is the STL formula $\neg \varphi\mathbf{U}_{[0,\Trec]}\mathbf{G}_{[0,\Tdur)}\varphi$, specifying that recovery from a violation of $\varphi$ occur within time~$\Trec$~(\emph{recoverability}), and subsequently that $\varphi$ be maintained for duration~$\Tdur$~(\emph{durability}).
These $R$-expressions, which are atoms in our SRS logic, can be combined using STL operators, allowing one to express composite resiliency specifications, e.g.,  multiple SRSs must hold simultaneously, or the system must eventually be resilient.
We define a quantitative semantics for SRSs in the form of a \emph{Resilience Satisfaction Value} (ReSV) function $r$ and prove its soundness and completeness w.r.t.\ STL's Boolean semantics. 
The $r$-value for $R_{\Trec,\Tdur}(\varphi)$ atoms is a singleton set containing a pair quantifying recoverability and durability.
The $r$-value for a composite SRS formula results in a set of non-dominated recoverability-durability
pairs, given that the ReSVs of subformulas might not be directly comparable (e.g., one subformula has superior durability but worse recoverability than another). To the best of our knowledge, this is the first \emph{multi-dimensional} quantitative semantics for an STL-based logic.  Two case studies demonstrate the practical utility of our approach.

\end{abstract}

\section{Introduction}

Resiliency (\emph{syn.}\  resilience) is defined as the ability to recover from or adjust easily to adversity or change~\cite{MW}.  Resiliency is of fundamental importance in Cyber-Physical Systems~(CPS), which are expected to exhibit safety- or mission-critical behavior even in the presence of internal faults or external disturbances. Consider for example the \emph{lane keeping} problem for autonomous vehicles~(AVs), which requires a vehicle to stay within the marked boundaries of the lane it is driving in at all times. The standard temporal-logic-based notion of safety is not ideally suited for specifying the AV's behavior when it comes to lane keeping.  This is because AV technology is not perfect and driving conditions~(e.g., being crowded by a neighboring vehicle) and other external disturbances may require occasional or even intermittent violations of lane keeping. 
Rather, the AV should behave resiliently in the presence of a lane violation, recovering from the violation in a timely fashion, and avoiding future lane departures for as long as possible. 
Unfortunately, there is no widely agreed notion of resiliency within the CPS community, despite several efforts to settle the issue~(see Section~\ref{sec:relatedwork}). 

\vspace{-2ex}
\paragraph{Our Contributions.} In this paper, we present
an STL-based framework for reasoning about resiliency in Cyber-Physical Systems. In our approach, resiliency has a syntactic characterization in the form of an \emph{STL-based Resiliency Specification}~(SRS). Given an arbitrary STL formula $\varphi$, time bounds~$\Trec$ and $\Tdur$, the SRS of $\varphi$, $R_{\Trec,\Tdur}(\varphi)$, is the STL formula $\neg \varphi\mathbf{U}_{[0,\Trec]}\mathbf{G}_{[0,\Tdur)}\varphi$, which 
specifies that recovery from a violation of $\varphi$ occur within time~$\Trec$, and subsequently $\varphi$ be maintained for duration~$\Tdur$ at least.
The SRS of $\varphi$ captures the requirement that a system quickly recovers from a violation of $\varphi$ (\emph{recoverability})  and then satisfy $\varphi$ for an extended period of time (\emph{durability}).
The $R_{\Trec,\Tdur}(\varphi)$ expressions, which are atoms in our SRS logic, can be inductively combined using STL operators, allowing one to express composite resiliency specifications; e.g., multiple SRSs must hold simultaneously~($R_{\Trec_1,\Tdur_1}(\varphi_1) \wedge R_{\Trec_2,\Tdur_2}(\varphi_2)$), or that the system must eventually be resilient~($\mathbf{F}_I R_{\Trec,\Tdur}(\varphi)$). 

We define a quantitative semantics for SRSs in the form of a \emph{Resilience Satisfaction Value} (ReSV) function $r$.
Our semantics for $R_{\Trec,\Tdur}(\varphi)$ atoms is a singleton set of the form $\{(\mathit{rec},\mathit{dur})\}$, where $\mathit{rec}$ quantifies how early before bound~$\Trec$ recovery occurs, and $\mathit{dur}$ indicates 
for how long after bound~$\Tdur$ property $\varphi$ is maintained. To the best of our knowledge, this is the first \emph{multi-dimensional} quantitative  semantics for STL.


Our approach does not make any simplifying assumption as to which of the two requirements (recoverability and durability) to prioritize or how to combine the two values. This decision can lead to a semantic structure involving two or more non-dominated $(\mathit{rec}, \mathit{dur})$ pairs.
In such situations, we choose to retain \emph{all} non-dominated pairs so as to provide a comprehensive, assumption-free, characterization of CPS resiliency. Thus, our semantics is a set of non-dominated $(\mathit{rec},\mathit{dur})$ pairs, which is derived inductively from subformulas using Pareto optimization.

For example, consider the SRS $\psi_1 \vee \psi_2$, where the ReSV of $\psi_1$~(over a given signal at a particular time) is $\{(2, 5)\}$ and, similarly, the ReSV of $\psi_2$ is $\{(3, 3)\}$. The semantics of $\psi_1 \vee \psi_2$ should choose the dominant pair, but the two are non-dominated: $(3,3)$ has better recoverability, while $(2,5)$ has better durability.  So we include both.
We prove that our semantics is sound and complete with respect to
the classic STL Boolean semantics by~(essentially) showing that an SRS $\psi$ has at least one non-dominated pair with $\mathit{rec}$, $\mathit{dur}$ $>0$ iff $\psi$ is true.

We perform an extensive experimental evaluation of our framework centered around two case studies: UAV package delivery and multi-agent flocking.  In both cases, we formulate mission requirements in STL, and evaluate their ReSV values in the context of various SRS specifications.  Our results clearly demonstrate the expressive power of our framework.

\section{Preliminaries}
\label{sec:prelim}

In this section, we introduce the syntax and semantics of Signal Temporal Logic (STL)~\cite{maler2004monitoring,donze2010robust}.
STL is a formal specification language for real-valued signals. We consider $n$-dimensional discrete-time signals \revision{$\xi:\mathbb{T}\rightarrow\mathbb{R}^n$}
where $\mathbb{T}= \mathbb{Z}_{\geq 0}$ is the (discrete) time domain.\footnote{Discrete-time signals over an arbitrary time step can always be mapped to signals over a unit time step.}
$\mathbb{T}$ is the interval $[0,|\xi|]$, where $|\xi|>0$ is the length of the signal. If $|\xi|<\infty$, we call $\xi$ bounded. We use the words signal and trajectory interchangeably. An STL atomic predicate $p\in \mathit{AP}$ is defined over signals and is of the form $p\equiv \mu(\xi(t)) \geq c$, $t\in\mathbb{T}$, $c\in\mathbb{R}$, and $\mu:\mathbb{R}^n\rightarrow\mathbb{R}$.  STL formulas $\varphi$ are defined recursively according to the following grammar~\cite{donze2010robust}:
\begin{align*}
    \varphi ::= 
    p\;|\; \neg\varphi\;|\; 
    \varphi_1 \wedge\varphi_2 \;|\; 
    \varphi_1 \mathbf{U}_I \varphi_2
\end{align*}

\noindent where $\mathbf{U}$ is the \emph{until} operator and $I$ is an interval on $\mathbb{T}$. \revision{Logical disjunction is derived from $\wedge$ and $\neg$ as usual, and }
operators \emph{eventually} and \emph{always} are derived from $\mathbf{U}$ as usual: $\mathbf{F}_{I}\varphi = \top\mathbf{U}_I \varphi$ and $\mathbf{G}_{I}\varphi = \neg (\mathbf{F}_I \neg\varphi)$. The satisfaction relation $(\xi,t)\models \varphi$, indicating $\xi$ satisfies $\varphi$ at time $t$, is defined as follows:\footnote{Given $t \in \mathbb{T}$ and interval $I$ on $\mathbb{T}$, $t + I$ is used to denote the set $\{t+t' \mid t' \in I\}$.}
\vspace{0.25ex}
\begin{align*}
&(\xi, t) \models p &\Leftrightarrow &\hspace{2ex} \mu(\xi(t)) \geq c\vspace{0.02in}\\
&(\xi, t) \models \neg \varphi &\Leftrightarrow &\hspace{2ex} \neg((\xi, t) \models \revision{\varphi})\vspace{0.02in}\\
&(\xi,t) \models \varphi_1 \wedge \varphi_2  &\Leftrightarrow&\hspace{2ex} (\xi,t) \models \varphi_1 \wedge (\xi,t) \models \varphi_2 \vspace{0.02in}\\
&(\xi,t) \models \varphi_1 \mathbf{U}_I \varphi_2 &\Leftrightarrow&\hspace{2ex} \exists\ t'\in t + I\ \text{s.t.}\ (\xi,t') \models \varphi_2 \wedge \forall\; t''\in [t,t'),\ (\xi,t'') \models \varphi_1 
\end{align*}
\vspace{0.25ex}
We call an STL formula $\varphi$ \emph{bounded-time} if all of its temporal operators are bounded (i.e., their intervals have finite upper bounds) and $|\xi|$ is large enough to determine satisfiability \revision{at time 0}; i.e., $|\xi|$ is greater than the maximum over the sums of all
the nested upper bounds on the temporal operators~\cite{yaghoubi2020worst}.  For example, if $\varphi$ is $\revision{\varphi_1}\mathbf{U}_{[0,5]}\mathbf{G}_{[1,2]}\varphi_2\wedge\mathbf{F}_{[0,10]}\mathbf{G}_{[1,6]}\varphi_2$, then a trajectory with length $N \geq \max(5+2, 10+6) = 16$ is sufficient to determine whether $\varphi$ holds.  In this paper, we only consider bounded-time STL formulas \revision{as in its original definition~\cite{maler2004monitoring}.}

STL admits a quantitative semantics given by a real-valued function $\rho$ such that $\rho(\varphi,\xi,t)>0 \Rightarrow (\xi,t) \models \varphi$, and defined as follows~\cite{donze2010robust}:
\vspace{0.25ex}
\begin{align*}
\rho(\revision{\mu(\xi(t)) \geq c}, \xi, t)&= \mu(\xi(t))-c   \vspace{0.02in}\\
\rho(\neg \varphi, \xi, t)&= -\rho(\varphi, \xi, t) \vspace{0.02in}\\
\rho(\varphi_1 \wedge \varphi_2, \xi, t) &= \min(\rho(\varphi_1, \xi, t),\rho(\varphi_2, \xi, t)) \vspace{0.02in}\\
\rho(\varphi_1\mathbf{U}_I\varphi_2, \xi, t) &= \max_{t'\in t+I}~\min (\rho(\varphi_2, \xi, t'),\min_{t''\in [t,t+t')}\rho(\varphi_1, \xi, t''))
\end{align*}

\vspace{-0.5ex}
A $\rho$-value, called the \emph{robustness satisfaction value}~(RSV), can be interpreted as the extent to which $\xi$ satisfies $\varphi$ at time $t$.  Its absolute value can be viewed as the distance of $\xi$ from the set of trajectories satisfying or
violating $\varphi$, with positive values indicating satisfaction and negative values indicating violation. 

\section{Specifying Resilience in STL }\label{sec:resilience}

In this section, we introduce our STL-based resiliency specification formalism and its quantitative semantics in terms of non-dominated recoverability-durability pairs.

\subsection{Resiliency Specification Language}
We introduce an STL-based temporal logic to reason about resiliency of STL formulas. 
Given an STL specification $\varphi$, there are two properties that characterize its resilience w.r.t.\ a signal $\xi$, namely, \emph{recoverability} and \emph{durability}: the ability to (1)~recover from a violation of $\varphi$ within time $\Trec$, and (2)~subsequently maintain $\varphi$ for at least time $\Tdur$.

\begin{example}\label{example:1}
Consider an STL specification $\varphi = (2\leq y \leq 4)$, where $y$ is a signal. In Figure~\ref{fig:stl_property}(a), signals $\xi_1$ and $\xi_2$ violate $\varphi$ at time $t_1$. Given recovery deadline $\alpha$, we see that only $\xi_1$ satisfies recoverability of $\varphi$ w.r.t.\ $\alpha$ because $\varphi$ becomes true before $t_1+\Trec$. In the case of $\xi_2$, $\varphi$ becomes true only after $t_1+\Trec$. In Figure~\ref{fig:stl_property}(b), signals $\xi_3$ and $\xi_4$ recover to satisfy $\varphi$ at time $t_2$.  Given durability bound $\beta$, we observe that only $\xi_3$ is durable w.r.t.\ $\beta$. 
\end{example}

\vspace{-5ex}
\begin{figure}[ht]
	\centering
	\subfloat[Signal $\xi_1$ satisfies recoverability.]{\includegraphics[width=.45\linewidth]{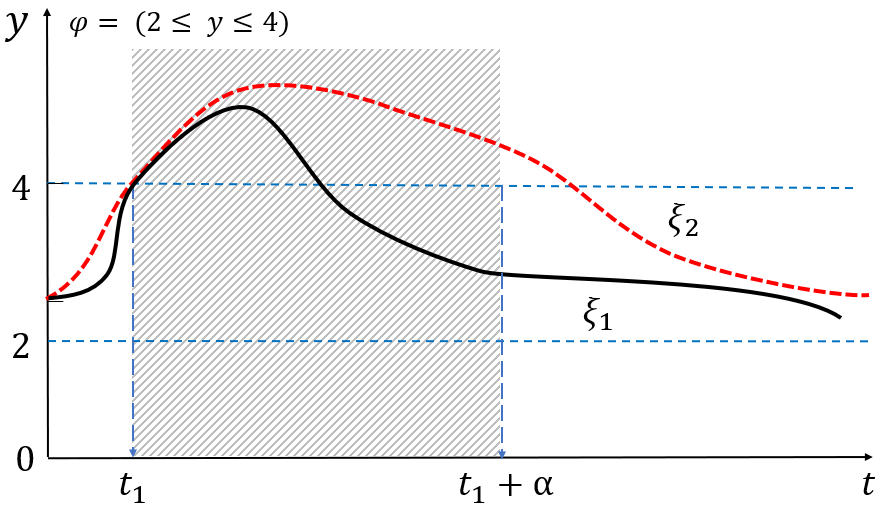}\label{fig:recoverability}}\hfill
	\subfloat[Signal $\xi_3$ satisfies durability.]{\includegraphics[width=.45\linewidth]{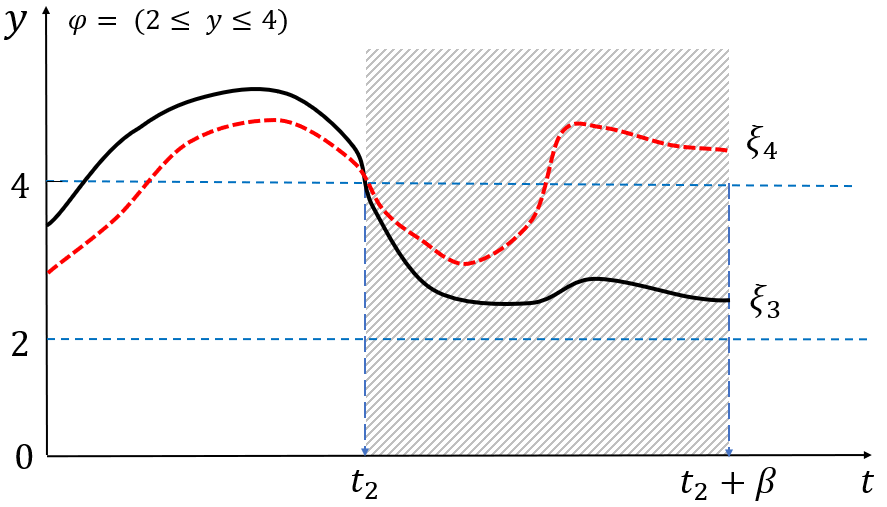}
		\label{fig:durability}}
	\vspace{-5pt}
	\caption{Resilience w.r.t.\ an STL formula $\varphi = (2\leq y \leq 4)$}
	\label{fig:stl_property}
\end{figure}

\vspace{-2ex}
Resilience of an STL formula $\varphi$ relative to a signal should be determined by the joint satisfaction of recoverability and durability of $\varphi$. We can formalize this notion using the STL formula $R_{\Trec,\Tdur}(\varphi) \equiv \neg \varphi\mathbf{U}_{[0,\Trec]}\mathbf{G}_{[0,\Tdur)}\varphi$, which captures the requirement that the system recovers from a violation of $\varphi$ within bound $\Trec$ and subsequently maintains $\varphi$ for bound $\Tdur$. In what follows, we introduce SRS, an STL-based resiliency specification language that allows one to combine multiple $R_{\Trec,\Tdur}(\varphi)$ expressions using Boolean and temporal operators.

\begin{definition}[STL-based Resiliency Specification]
The STL-based resiliency specification~(SRS) language is defined by the following grammar:
\vspace{0.25ex}
\begin{align*}
    \psi ::= \srs_{\Trec,\Tdur}(\varphi)
    \;|\; \neg\psi\;|\; 
    \psi_1 \wedge\psi_2 \;|\; \psi_1 \mathbf{U}_I \psi_2
\end{align*}

\noindent where $\varphi$ is an STL formula, $\srs_{\Trec,\Tdur}(\varphi) \equiv \neg \varphi\mathbf{U}_{[0,\Trec]}\mathbf{G}_{[0,\Tdur)}\varphi$, $\Trec,\Tdur\in \mathbb{T}$, 
$\Tdur>0$. 
\end{definition}

\revision{We use a semi-closed interval in the $\mathbf{G}$ operator, reflecting our requirement that $\varphi$ stays true for a time interval of duration $\beta$, and may be false at the end of the interval. The piece-wise constant interpretation of discrete-time signals implies that properties between two consecutive steps remain unchanged~(see Remark~\ref{rem:time_rob}).
} 
\revision{SRS formulas are also restricted to bounded-time intervals as those in STL formulas.}
Boolean satisfiability of an SRS formula $\psi$ reduces to the satisfiability of the corresponding STL formula obtained by replacing every atom $\srs_{\Trec,\Tdur}(\varphi)$ with $\neg \varphi\mathbf{U}_{[0,\Trec]}\mathbf{G}_{[0,\Tdur)}\varphi$. 
Note that satisfiability of an $\srs_{\Trec,\Tdur}(\varphi)$ atom involves satisfying the recoverability and durability requirements for $\varphi$ highlighted in Example~\ref{example:1}. 

\begin{remark}[Why a new logic? (1/2)]
We note that SRS is equivalent to STL in the sense that any STL formula $\varphi$ is equivalent to the SRS atom $\srs_{0,1}(\varphi)$ and, conversely, any SRS formula is an STL formula. \textit{Then why would we introduce a new logic?} The SRS logic is explicitly designed to express specifications that combine resiliency requirements with Boolean and temporal operators. Most importantly, as we will see, our semantics for SRS is defined inductively starting from $\srs_{\Trec,\Tdur}(\varphi)$ atoms, and not from STL atomic predicates. The former are more expressive than the latter~(an STL atomic predicate can be expressed via an SRS atom, but not vice versa).
\end{remark}

\begin{remark}[Resilience vs.\ safety, liveness, and stability]
We stress that safety properties $\mathbf{G}_I \varphi$ are not sufficient to capture resilience, as they do not allow for occasional, short-lived violations. Among the liveness properties, reachability properties $\mathbf{F}_I \varphi$ do not capture the requirement that we want to satisfy $\varphi$ in a durable manner. Similarly, progress properties, summarized by the template $\mathbf{G}_{I_1} (\neg \varphi \rightarrow \mathbf{F}_{I_2} \varphi)$, do not require the signal to satisfy $\varphi$ for extended time periods (but they  ``infinitely'' often lead to $\varphi$).

Thus, one might be tempted to combine safety and liveness and express resilience as a $\mathbf{F}_{[0,\Trec]}\mathbf{G}_{[0,\Tdur)} \varphi$, a template often called \emph{stability} or stabilization~\cite{cook2011proving}. There is a subtle but important difference between stability and our definition $\neg \varphi\mathbf{U}_{[0,\Trec]}\mathbf{G}_{[0,\Tdur)}\varphi$ of resilience: there could be multiple recovery episodes occurring in a trajectory, i.e., time steps where $\varphi$ transitions from false to true. Stability is satisfied by \textit{any} recovery episode within time $[0,\Trec]$ provided that $\varphi$ stays true for at least time $\Tdur$. On the contrary, resilience is satisfied by only the \textit{first} recovery episode, provided that $\varphi$ is not violated for longer than $\Trec$ and stays true for at least $\Tdur$. This is an important difference because our resiliency semantics is defined as a recoverability-durability pair, roughly corresponding to the time it takes to recover from a violation of $\varphi$ and the time for which $\varphi$ subsequently remains true (see Definition~\ref{def:resv}). Since the stability pattern matches multiple recovery episodes, it is not clear which episode to use in the computation of our resiliency semantics.  Our definition solves this ambiguity by considering the first episode. We remark that SRS allows us to express properties like $\mathbf{G}_{I} R_{\Trec,\Tdur}(\varphi)$ (or $\mathbf{F}_{I} R_{\Trec,\Tdur}(\varphi)$), which can be 
understood as enforcing the resiliency requirement $R_{\Trec,\Tdur}(\varphi)$ for each (for some) recovery episode within $I$, and whose semantics can be interpreted as the worst- (best-) case recovery episode within $I$. 
\end{remark}

\begin{remark}[Why a new logic? (2/2)]
The reader might wonder why we would be interested in using temporal operators to reason about a resiliency atom of the form $R_{\Trec,\Tdur}(\varphi)$ as opposed to pushing these operators into $R_{\Trec,\Tdur}(\varphi)$. For example, why would we consider (1)~$\mathbf{G}_{I}\srs_{\Trec,\Tdur}(\varphi)$ instead of (2)~$\srs_{\Trec,\Tdur}(\mathbf{G}_{I} \varphi)$ for an STL formula $\varphi$? The two expressions are fundamentally different: (1)~states that the resiliency specification holds for all $\varphi$-related recovery episodes occurring in interval $I$; (2)~states that the resiliency specification must hold in the first recovery episode relative to $\mathbf{G}_{I} \varphi$ (i.e., the first time $\mathbf{G}_{I} \varphi$ switches from false to true). Arguably, (1) is more useful than (2), even though both are reasonable SRS expressions (compound and atomic, respectively). 
\end{remark}

\subsection{Semantics of Resiliency Specifications}

We provide a quantitative semantics for SRS specifications in the form of a \textit{resilience satisfaction value} (ReSV). Intuitively, an ReSV value quantifies the extent to which recoverability and durability are satisfied. More precisely, it produces a non-dominated set of pairs $(x_r,x_d)\in \mathbb{Z}^2$, where (in the atomic case) $x_r$ quantifies how early before bound $\Trec$ the system recovers, and $x_d$ quantifies how long after bound $\Tdur$ the property is maintained. We further demonstrate the soundness of the ReSV-based semantics w.r.t.\ the STL Boolean interpretation of resiliency specifications. The first step is to establish when one recoverability-durability pair is better than another. 

A \emph{set $S\subseteq \mathbb{R}^n$ of non-dominated tuples} is one where no two tuples $x$ and $y$ can be found in $S$ such that $x$ \textit{Pareto-dominates} $y$, denoted by $x \succ y$. We have that $x \succ y$ if $x_i\geq y_i$, $1\leq i\leq n$, and $x_i>y_i$ for at least one such $i$, under the usual ordering $>$.

We define a novel notion of ``resilience dominance''  captured by the relation $\succ_{re}$ in $\mathbb{Z}^2$.
This is needed because using the standard Pareto-dominance relation  $\succ$ (induced by the canonical $>$ order) would result in an ordering of ReSV pairs that is inconsistent with the Boolean satisfiability viewpoint. Consider the pairs $(-2,3)$ and $(1,1)$.  By Pareto-dominance, $(-2,3)$ and $(1,1)$ are mutually non-dominated, but an ReSV of $(-2,3)$ indicates that the system doesn't satisfy recoverability; namely it recovers two time units too late. On the other hand, an ReSV of $(1,1)$ implies satisfaction of both recoverability and durability bounds, and thus should be preferred to $(-2,3)$. We formalize this intuition next.

\begin{definition}[Resiliency Binary Relations]\label{def:binaryrelation}
We define binary relations $\succ_{re}$, $=_{re}$, and $\prec_{re}$ in $\mathbb{Z}^2$.  Let $x,y\in \mathbb{Z}^2$ with $x=(x_r,x_d)$, $y=(y_r,y_d)$, and \emph{sign} is the signum function. 
We have that $x \succ_{re} y$ if either of the following conditions holds:
\begin{enumerate}
    \item $sign(x_r)+sign(x_d) > sign(y_r)+sign(y_d)$.
    \item $sign(x_r)+sign(x_d) = sign(y_r)+sign(y_d)$, and $x \succ y$.
\end{enumerate}
We say that $x$ and $y$ are \emph{mutually non-dominated}, denoted $x =_{re} y$, if $sign(x_r)+sign(x_d) = sign(y_r)+sign(y_d)$ and neither $x\succ y$ nor $x \prec y$.
Under this ordering, a \emph{non-dominated set} $S$ is such that $x=_{re} y$ for every choice of $x,y \in S$. We denote by $\prec_{re}$ the dual of $\succ_{re}$.
\end{definition}

It is easy to see that $\succ_{re}$, $\prec_{re}$ and $=_{re}$ are mutually exclusive, and in particular, they collectively form a partition of $\mathbb{Z}^2\times \mathbb{Z}^2$. This tells us that for any $x,y \in \mathbb{Z}^2$, either $x$ \emph{dominates} $y$ (i.e., $x\succ_{re}y$), $x$ \emph{is dominated} by $y$ (i.e., $x\prec_{re}y$), or the two are mutually non-dominated (i.e., $x=_{re}y$).

\begin{restatable}{lemma}{order}\label{lemma:largerispartialorder}
Relations $\succ_{re}$ and $\prec_{re}$ are strict partial orders.
\end{restatable}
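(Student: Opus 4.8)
The plan is to verify the two defining properties of a strict partial order --- irreflexivity and transitivity --- for $\succ_{re}$, and then obtain the claim for $\prec_{re}$ by duality. The key observation is that $\succ_{re}$ has a \emph{lexicographic} structure. Writing $s(x) = \mathit{sign}(x_r) + \mathit{sign}(x_d) \in \{-2,-1,0,1,2\}$, the definition says precisely that $x \succ_{re} y$ holds iff $s(x) > s(y)$, or else $s(x) = s(y)$ and $x \succ y$ in the ordinary Pareto order. Thus $\succ_{re}$ is the lexicographic combination of the total order $>$ on the integer $s$-value with the strict partial order $\succ$ applied to ties. Since both ingredients are themselves strict partial orders (in particular $\succ$ is irreflexive and transitive, as recalled just before the definition), their lexicographic product inherits the property; the proof below simply makes this explicit.

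For irreflexivity I would note that $x \succ_{re} x$ can only arise through the second clause, as the first would require $s(x) > s(x)$, which is impossible. But that clause demands $x \succ x$, contradicting irreflexivity of standard Pareto dominance. Hence $x \not\succ_{re} x$ for all $x \in \mathbb{Z}^2$.

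For transitivity, I would assume $x \succ_{re} y$ and $y \succ_{re} z$ and split into cases according to which clause witnesses each relation. Whenever at least one of the two steps uses the strict $s$-inequality (clause~1), the chain $s(x) \geq s(y) \geq s(z)$ has at least one strict inequality, yielding $s(x) > s(z)$, so $x \succ_{re} z$ by clause~1. In the only remaining case both steps use clause~2, which gives $s(x) = s(y) = s(z)$ together with $x \succ y$ and $y \succ z$; transitivity of $\succ$ then gives $x \succ z$, and combined with $s(x) = s(z)$ we conclude $x \succ_{re} z$ by clause~2. Asymmetry then follows automatically from irreflexivity and transitivity.

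Finally, $\prec_{re}$ is defined as the dual of $\succ_{re}$, i.e.\ $x \prec_{re} y \Leftrightarrow y \succ_{re} x$, and the dual of any strict partial order is again a strict partial order, so the statement for $\prec_{re}$ is immediate. I do not expect a genuine obstacle here: the argument is routine casework, and the only point needing a little care is the transitivity analysis --- specifically the mixed cases where one step is witnessed by clause~1 and the other by clause~2 --- which the lexicographic viewpoint renders transparent.
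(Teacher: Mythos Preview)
Your proposal is correct and follows essentially the same approach as the paper: both arguments establish irreflexivity from the irreflexivity of $\succ$, transitivity by the same case split on the $s$-value (the paper just compresses your four cases into the dichotomy ``either $s(x)>s(z)$, or $s(x)=s(z)$ and $x\succ z$''), and then invoke duality for $\prec_{re}$. Your explicit lexicographic framing is a helpful conceptual gloss, but the underlying reasoning is identical.
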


A proof can be found in Appendix~\ref{appendix:lemma}.  

\begin{definition}[Maximum and Minimum Resilience Sets]\label{def:maxmin_res_set}
Given $P\subseteq\mathbb{Z}^2$, \revision{with $P\neq \emptyset$,} the \emph{maximum resilience set} of $P$, denoted $\max_{re}(P)$, is the largest subset $S\subseteq P$ such that $\forall x\in S$, $\forall y\in P$, $x\succ_{re}y$ or $x=_{re}y$.
The \emph{minimum resilience set} of $P$, denoted $\min_{re}(P)$, 
is the largest subset $S\subseteq P$ such that $\forall x\in S$, $\forall y\in P$, $x\prec_{re}y$ or $x=_{re}y$.
\end{definition}

\begin{restatable}{corollary}{maxminsetone}\label{corollary:maxminsetarenondominated}
Maximum and minimum resilience sets are \revision{non-empty} and non-dominated sets.
\end{restatable}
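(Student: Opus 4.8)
The plan is to derive the corollary directly from Lemma~\ref{lemma:largerispartialorder} together with the observation made just before it, namely that $\succ_{re}$, $\prec_{re}$, and $=_{re}$ partition $\mathbb{Z}^2\times\mathbb{Z}^2$, so that for any $x,y$ exactly one of the three relations holds. I would treat $\max_{re}(P)$ and $\min_{re}(P)$ symmetrically, giving the full argument for $\max_{re}(P)$ and noting the dual for $\min_{re}(P)$ via the dual order $\prec_{re}$, which is also a strict partial order by the lemma. Throughout I use that the defining condition of $\max_{re}(P)$ is a property of individual elements, hence $\max_{re}(P)=\{x\in P : \forall y\in P,\ x\succ_{re} y \text{ or } x=_{re} y\}$; the union of all subsets satisfying the condition again satisfies it, so this set is indeed the unique largest such subset.

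For non-emptiness, first note that the ReSV sets to which this definition is applied are finite: bounded-time SRS formulas over bounded signals yield recoverability and durability values ranging over a bounded region of $\mathbb{Z}^2$. Since $P$ is a finite nonempty set and $\succ_{re}$ is a strict partial order, $P$ has at least one $\succ_{re}$-maximal element $x$, i.e., no $y\in P$ satisfies $y\succ_{re} x$. I then claim $x\in\max_{re}(P)$: for any $y\in P$, by the partition exactly one of $x\succ_{re}y$, $x\prec_{re}y$, $x=_{re}y$ holds, and the middle case would mean $y\succ_{re}x$, contradicting maximality; hence $x\succ_{re}y$ or $x=_{re}y$. Thus $x$ satisfies the defining property and $\max_{re}(P)\neq\emptyset$.

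For the non-dominated property, take arbitrary $x,y\in\max_{re}(P)$. Applying the defining property of $x$ to the element $y$ gives $x\succ_{re}y$ or $x=_{re}y$, while applying that of $y$ to $x$ gives $y\succ_{re}x$ or $y=_{re}x$. If $x\succ_{re}y$ held, then by mutual exclusivity neither $y\succ_{re}x$ nor $y=_{re}x$ could hold, contradicting the second disjunction. Therefore $x=_{re}y$, and since $x,y$ were arbitrary, $\max_{re}(P)$ is a non-dominated set. The argument for $\min_{re}(P)$ is identical after replacing $\succ_{re}$ by $\prec_{re}$ and ``maximal'' by ``minimal''.

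The only delicate point is the existence of a $\succ_{re}$-maximal element in the non-emptiness step, and this is precisely where finiteness of $P$ is essential: an unbounded $P$ can contain an infinite $\succ_{re}$-ascending chain within a single sign-sum tier, e.g.\ $\{(n,0): n\in\mathbb{Z}_{>0}\}$, whose elements all have sign-sum $1$ and are totally ordered by $\succ$, yielding an empty maximum set. Everything else is routine bookkeeping with the three-way partition and the asymmetry guaranteed by Lemma~\ref{lemma:largerispartialorder}.
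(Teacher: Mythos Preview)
Your proof is correct. The non-dominated argument is essentially the same as the paper's: both apply the defining condition of $\max_{re}(P)$ symmetrically to two members and use mutual exclusivity of $\succ_{re}$, $\prec_{re}$, $=_{re}$ to force $x=_{re}y$.

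For non-emptiness the approaches differ. The paper argues by induction on the construction of $P$: the maximum resilience set of a singleton is trivially non-empty, and adding a single point $r$ to a set whose maximum resilience set $S$ is already non-empty yields a non-empty $S'$ (either some $s\in S$ survives, or $r$ dominates every $s\in S$ and hence dominates or equals every element of $P$). You instead observe that $\succ_{re}$ is a strict partial order (Lemma~\ref{lemma:largerispartialorder}), so any finite non-empty $P$ has a $\succ_{re}$-maximal element, which by the trichotomy lies in $\max_{re}(P)$. Both arguments silently require $P$ to be finite; your version makes this explicit and even exhibits the failure for infinite $P$ via an ascending chain in a single sign-sum tier, which is a nice sharpening. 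Your route is shorter and avoids the somewhat delicate case analysis in the paper's inductive step; the paper's induction, on the other hand, is closer in spirit to how these sets are actually built up in the ReSV computation.
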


A proof can be found in Appendix~\ref{appendix:corollary}.

\revision{
\begin{example}
Let $P=\{(-1,2),(1,-2), (2,-1)\}$. Then, we have $\max_{re}(P)=\{(-1,2), (2,-1)\}$ because $(-1,2)=_{re} y$ for all $y\in P$ and  $(2,-1)\succ_{re} (1,-2)$, $(2,-1)=_{re} (1,-2)$, and $(2,-1)=_{re} (2,-1)$. In contrast, $(1,-2)$ is not in $\max_{re}(P)$ because $(1,-2)\prec_{re}(2,-1)$. 
Similarly, we have $\min_{re}(P)=\{(-1,2), (1,-2)\}$. We also note that (as per Corollary~\ref{corollary:maxminsetarenondominated}) the elements of $\max_{re}(P)$ and $\min_{re}(P)$ are mutually non-dominated, i.e., $(-1,2) =_{re} (2,-1)$ and  $(-1,2) =_{re} (1,-2)$, respectively. 
\end{example}
}

Now we are ready to introduce the semantics for our SRS logic. Its definition makes use of maximum and minimum resilience sets in the same way as the traditional STL robustness semantics (see Section~\ref{sec:prelim}) uses max and min operators in compound formulas. Hence, by Corollary~\ref{corollary:maxminsetarenondominated}, our semantics produces non-dominated sets, which implies that all pairs in such a set are equivalent from a Boolean satisfiability standpoint.  This is because $x_r>0$ ($x_d>0$) in Definition~\ref{def:binaryrelation} implies Boolean satisfaction of the  recoverability (durability) portion of an $R_{\Trec,\Tdur}(\varphi)$ expression.
This property will be useful in Theorem~\ref{theorem:soundness}, where we show that our semantics is sound with respect to the Boolean semantics of STL.  

\begin{definition}[Resilience Satisfaction Value]\label{def:resv}
Let $\psi$ be an SRS specification and
$\xi:\mathbb{T}\rightarrow\mathbb{R}^n$ a signal.
We define $\resv(\psi,\xi,t) \subseteq\mathbb{Z}^2$, the \emph{resilience satisfaction value} (ReSV) of $\psi$ with respect to $\xi$ at time $t$, as follows. 
\begin{itemize}
    \item For $\psi$ an SRS atom of the form $\srs_{\Trec,\Tdur}(\varphi)$, $\varphi$ an STL formula,
    \vspace{0.25ex}
    \begin{align}
    \resv(\psi,\xi,t) = \{(-t_{rec}(\varphi,\xi,t)+\Trec, t_{dur}(\varphi,\xi,t)-\Tdur)\}\label{eq:resv}
\end{align}
where 
\begin{align}
    t_{rec}(\varphi,\xi,t) &= \min \left( \{d \in \mathbb{T} \mid (\xi,t+d) \models \varphi \} \cup \{ |\xi|-t \}\right)\label{eq:t_rec}\\[0.25ex]
    t_{dur}(\varphi,\xi,t) &= \min \left( \{d \in \mathbb{T} \mid (\xi,t'+d) \models \neg \varphi \} \cup \{ |\xi|-t' \}\right),\label{eq:t_dur}\\
    &\hspace{2em}t'= t+t_{rec}(\varphi,\xi,t)\nonumber
\end{align}

\item The ReSV of a composite SRS formula is defined inductively as follows.
\vspace{0.25ex}
\begin{align*}
    \resv(\neg\psi,\xi,t) &= \{(-x,-y): (x,y) \in \resv(\psi,\xi,t)\}\\[0.1ex]
    \resv(\psi_1\wedge\psi_2,\xi,t) &= {\min}_{re} (\resv(\psi_1,\xi,t)\;\cup\; \resv(\psi_2,\xi,t))\\[0.1ex]
    \resv(\psi_1\vee\psi_2,\xi,t) &= {\max}_{re} (\resv(\psi_1,\xi,t)\;\cup\; \resv(\psi_2,\xi,t))\\[0.1ex]
    \resv(\mathbf{G}_I\psi,\xi,t) &= {\min}_{re}(\cup_{t'\in t+I} \,\resv(\psi,\xi,t'))\\[0.1ex]
    \resv(\mathbf{F}_I\psi,\xi,t) &= {\max}_{re}(\cup_{t'\in t+I} \,\resv(\psi,\xi,t'))\\[0.1ex]
    \resv(\psi_1\mathbf{U}_I\psi_2,\xi,t) &= {\max}_{re} \cup_{t'\in t+I}{\min}_{re}( \resv(\psi_2,\xi,t')~\cup\\
    &\hspace{7em}{\min}_{re}\cup_{t''\in[t, t+t')}\resv(\psi_1,\xi,t''))
\end{align*}
\end{itemize}
\end{definition}

In the base case, $t'=t+t_{rec}(\varphi,\xi,t)$ is the first time $\varphi$ becomes true starting from and including $t$. If recovery does not occur along $\xi$ (and so the first set in Eq.~\eqref{eq:t_rec} is empty), then $t'=|\xi|$ (the length of the trajectory). 
Similarly, $t'+t_{dur}(\varphi,\xi,t)$ is the first time $\varphi$ is violated after $t'$. Thus $t_{dur}(\varphi,\xi,t)$ quantifies the maximum time duration $\varphi$ remains true after recovery at time $t'$. If $\varphi$ is true for the entire duration of $\xi$, then $t_{rec}(\varphi,\xi,t)=0$ and $t_{dur}(\varphi,\xi,t)=|\xi|-t$. 
If $\varphi$ is false for the duration of $\xi$, then $t_{rec}(\varphi,\xi,t) = |\xi|-t$;
therefore $t'=|\xi|$ and $t_{dur}(\varphi,\xi,t)=0$.

Therefore, the semantics for an SRS atom $\srs_{\Trec,\Tdur}(\varphi)$ is a singleton set $\{(x_r,x_d)\}$, where $x_r$ quantifies how early before time bound $\Trec$ recovery occurs, and $x_d$ indicates for how long after time bound $\Tdur$ the property is maintained. Thus, $x_r$ and $x_d$ quantifies the satisfaction extent (in time) of recoverability and durability, respectively. An important property follows from this observation: similar to traditional STL robustness, a positive (pair-wise) ReSV value indicates satisfaction of recoverability or durability, a negative ReSV value indicates violation, and larger ReSV values indicate better resiliency, i.e., shorter recovery times and longer durability. 

The ReSV semantics for composite SRS formulas is derived by computing sets of maximum/minimum recoverability-durability pairs in a similar fashion to STL robustness: for the $\wedge$ and $\mathbf{G}$ operators, we consider the minimum set over the ReSV pairs resulting from the semantics of the subformulas; for $\vee$ and $\mathbf{F}$, we consider the maximum set. We remark that our semantics induces sets of pairs (rather than unique values); our $\succ_{re}$ and $\prec_{re}$ relations used to compute maximum and minimum resilience sets are therefore partial in nature.
This is to be expected because any reasonable ordering on multi-dimensional data is partial by nature.
For example, given pairs $(2,5)$ and $(3,3)$, there is no way to establish which pair dominates the other: the two are indeed non-dominated and, in particular, the first pair has better durability but worse recoverability than the second pair. In such a situation, our semantics would retain both pairs.

\paragraph{Algorithm for computing the ReSV function $r$.} The algorithm to compute $r$ is a faithful implementation of Definition~\ref{def:resv}. It takes an SRS formula $\psi$, signal $\xi$, and time $t$, produces a syntax tree representing $\psi$, and computes the $r$-values of the subformulas of $\psi$ in a bottom-up fashion starting with the SRS atomic $R$-expressions at the leaves.
Each leaf-node computation amounts to evaluating satisfaction of the corresponding STL formula. The complexity of this operation is $O(|\xi|^{2l})$, for a trajectory $\xi$ and an STL formula with at most $l$ nested \emph{until} operators~\cite{donze2010robust,haghighi2015spatel}. Let $m$ be the number of $R$-expressions in $\psi$. 
Given that we need to evaluate the $R$-expressions at each time point along $\xi$, the time needed to compute the $r$-values of all SRS atoms in $\psi$ is
$O(m\,|\xi|^{2l+1})$. 

Every node $v$ of the tree has an associated result set $P_v$ of $(\mathit{rec}, \mathit{dur})$ pairs. When $v$ is an interior node of the tree, 
$P_v$ is determined, using Definition~\ref{def:resv}, by computing the maximum or minimum resilience set of the result set(s) of $v$'s children.  
The complexity for computing $P_v$ can thus be shown to be quadratic in the size of its input. In particular, the size of the input at the root of $\psi$'s syntax tree is bounded by $O(|\xi|^{2L})$, where $L$ is the maximum number of nested \emph{until} operators in $\psi$.\footnote{This is because the size of an interior node $v$'s input is bounded (in the case of the \emph{until} operator) by $|\xi|^2$ times the sum of the sizes of the result sets of $v$'s children. The size of the root node's input is thus $O(|\xi|^{2L})$.}
Furthermore, the ReSV of the \emph{until} operator can be computed in a manner similar to how STL robustness is computed for the \emph{until} operator.
Therefore, the complexity of computing the root node's ReSV is $O((|\xi|^{2L})^2+|\xi|^{2L})$.
Consequently, the total complexity of computing the ReSV of $\psi$ is $O(m\,|\xi|^{2l+1}+|\xi|^{4L})$.

\begin{remark}[Bounded-time SRS formulas]\label{rem:bounded_srs}
We say that an SRS formula is \emph{bounded-time} if all of its temporal operators are bounded and the STL formulas serving as SRS atoms are bounded-time.  When a bounded signal $\xi$ is not long enough to evaluate a bounded-time formula, we extend it to the required length simply by repeating its terminal value $\xi(T)$.  For example, let $\varphi = \mathbf{G}_{[0,10]}\;(2\leq y \leq 4)$ and consider $\xi_3$ from Figure~\ref{fig:stl_property}(b), where $|\xi_3|=t_2+\beta$.  We extend $|\xi_3|$ to $t_2+\beta+10$, the required length to determine if $(\xi_3,t+\beta)\models\varphi$ (it does). \revision{Note that the signal extension might result in an overestimate of recoverability or durability.}
\end{remark}

\begin{remark}[Relation to time robustness]\label{rem:time_rob}
A notion of (right) time robustness of an STL atomic proposition $p$ on a trajectory $\xi$ at time $t$ is given by~\cite{donze2010robust,rodionova2021time}:
\vspace{0.1ex}
\begin{align*}
\theta^+(p,\xi,t)=\chi(p, \xi,t)\cdot \max\{d\geq 0\ s.t.\ \forall\,t'\in [t,t+d], \chi(p, \xi,t') = \chi(p, \xi,t)\}
\end{align*}
\noindent where $\chi(p, \xi,t)=+1$ if $(\xi,t)\models p$, and $-1$ otherwise. Intuitively, $|\theta^+(p,\xi,t)|$ measures how long after $t$ property $p$ (or $\neg p$) remains satisfied. 
One might be tempted to use $\theta^+$ in our definition  $t_{rec}(\varphi,\xi,t)$ by setting it to $\max\{0,\theta^+(\neg \varphi,\xi,t)\}$, i.e., the maximum time duration for which $\varphi$ is violated (or $0$ if $\varphi$ holds at $t$). This, however, implies that $t'=t+t_{rec}(\varphi,\xi,t)$ now represents the last time point for which $\varphi$ is false. In our definition, we want instead $t'$ to be the first time point $\varphi$ becomes true. This difference is important, especially for discrete-time signals where the distance between two consecutive time points is non-negligible.

Moreover, time robustness may not quite handle some common corner cases.  Consider a proposition $p$, and two signals $\xi_1$ and $\xi_2$ such that $(\xi_1,t)\models p$ and $(\xi_1,t')\not\models p$, $t'>t$, and $(\xi_2,t)\not\models p$ and $(\xi_2,t')\models p$, $t'>t$. The two signals have opposite behaviors in terms of satisfying $p$. In discrete-time settings (where a discrete-time signal is interpreted in the continuous domain as a piece-wise constant function), we have that $\xi_1$ ($\xi_2$) satisfies $p$ ($\neg p$) throughout the interval $[t,t+1)$ (i.e., for ``almost'' 1 time unit). However, time robustness cannot distinguish between the two signals, namely, $\theta^+(p,\xi_1,t)=\theta^+(p,\xi_2,t)=0$. Thus, if we used $\theta^+$ to define $t_{rec}$ it would be impossible to disambiguate between the first case (where no violation occurs at $t$) and the second case (where a violation occurs at $t$ followed by a recovery episode at the next step). Our definition of $t_{rec}$ in Eq.~\eqref{eq:t_rec} correctly assign a value of~$0$ to $\xi_1$ ($p$ is already satisfied, no recovery at $t$) and a value of~$1$ to $\xi_2$ (from $t$, it takes~$1$ time unit for $p$ to become true).
\end{remark}

\revision{
\begin{remark}
We focus on discrete-times signals, thus a discrete-time SRS framework and discrete-time ReSV semantics. However, we expect that our approach can be extended to continuous time in a straightforward manner,
because STL is well-defined over continuous-time signals. We leave this extension for future work.
\end{remark}
}

\begin{figure}[ht]
	\centering
	\subfloat[]{\includegraphics[width=.53\linewidth]{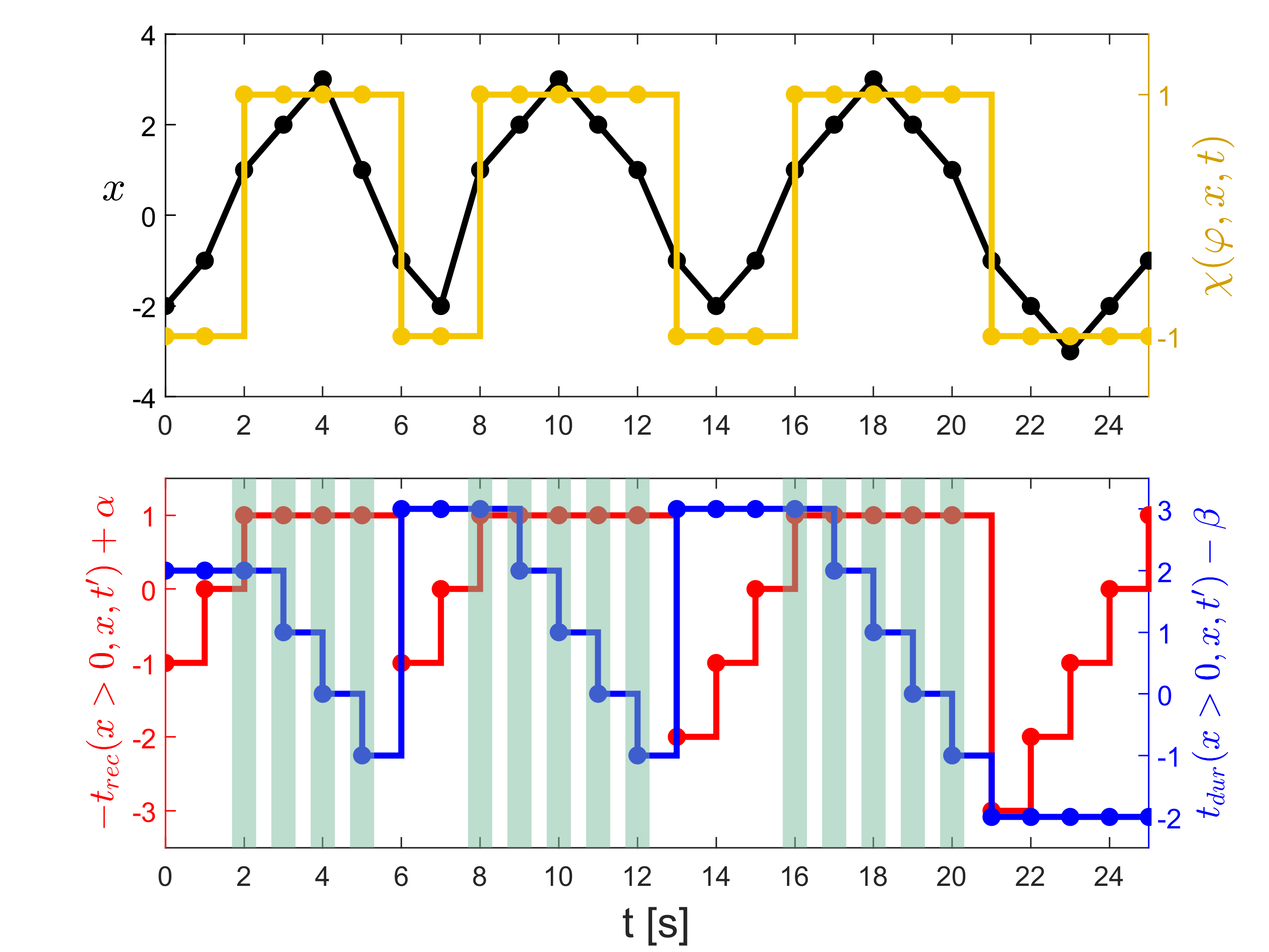}}\hfill
	\subfloat[]{\includegraphics[width=.47\linewidth]{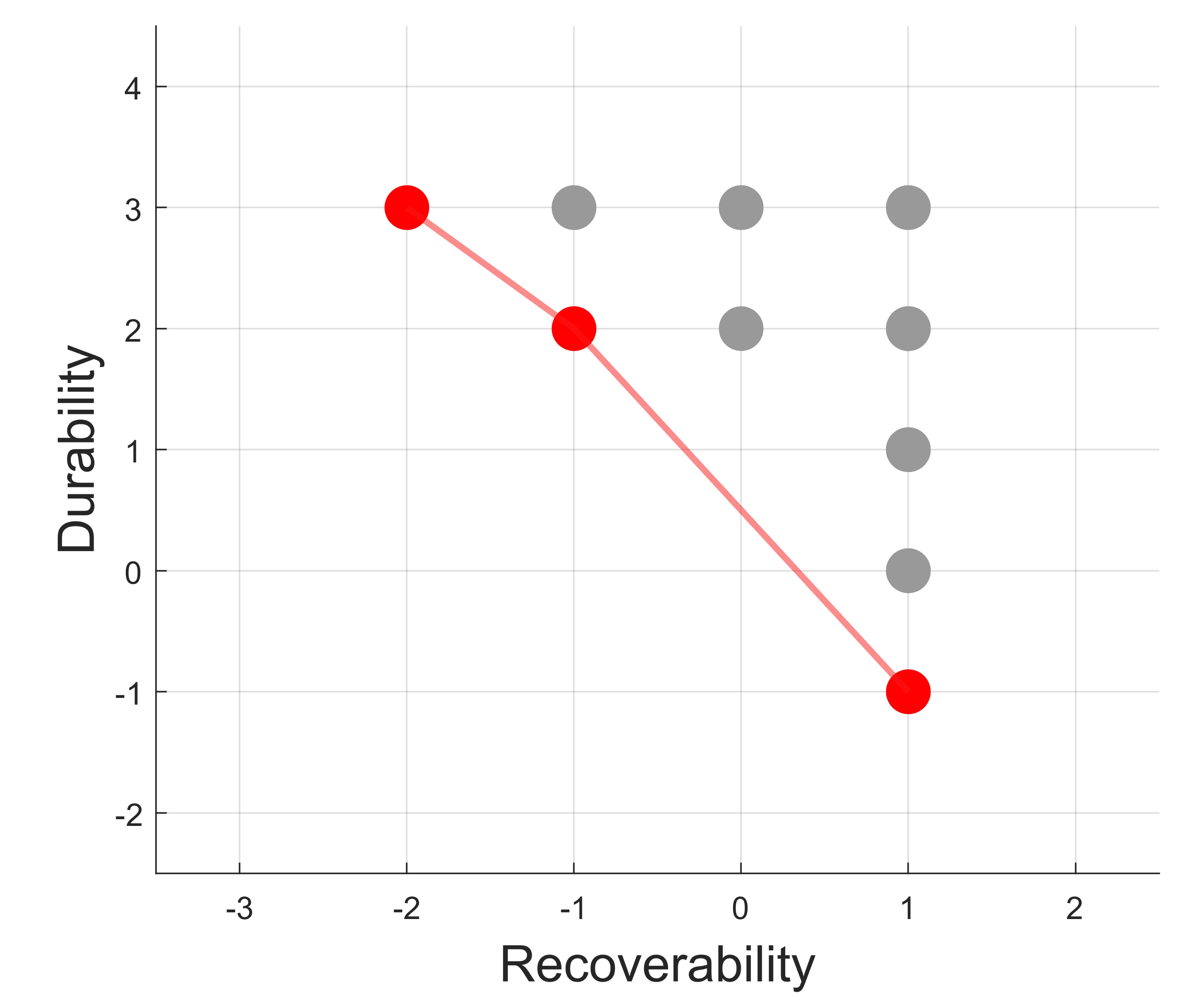}}
	\vspace{-1.5ex}
	\caption{(a)~Signal $x$ and its Boolean semantics w.r.t.\ $x>0$ (top). Recoverability (red) and durability (blue) values for $x>0$ over signal $x$ (bottom). Green bars indicate $x>0$. (b)~Red dots represent the Pareto front for the recoverability-durability pairs of $R_{\Trec,\Tdur}(x>0)$ over time interval $[0,20]$.}
	\vspace{-3ex}
	\label{fig:resv_example}
\end{figure}

\begin{example}
In Figure~\ref{fig:resv_example}, we consider a (one-dimensional) signal $x$ (Figure~\ref{fig:resv_example}(a) top) and the proposition $x>0$ (whose Boolean satisfaction value w.r.t.\ $x$ is plotted on top of $x$). Consider the SRS formula $\psi_1 = \mathbf{G}_{[0,20]}R_{\Trec,\Tdur}(x>0)$ with $\alpha=1$, $\beta=2$. 
Following Definition~\ref{def:resv}, the ReSV of $\psi_1$ can be written as:
\vspace{0.1ex}
\begin{align*}
    \resv(\psi_1,x,0)={\min}_{re} \left(\cup_{t'\in[0,20]}\,\resv(R_{\Trec,\Tdur}(x>0),x,t')\right)
\end{align*}
\noindent
where $ \resv(R_{\Trec,\Tdur}(x>0),x,t') = \{(-t_{rec}(x>0,x,t')+\Trec, t_{dur}(x>0,x,t')-\Tdur)\}$.
Figure~\ref{fig:resv_example}(a) bottom shows how the $-t_{rec}$ and the $t_{dur}$ values evolve over time. These values are also displayed in Figure~\ref{fig:resv_example}(b) on a recoverability-durability plane, to better identify the Pareto-optimal values that constitute the ReSV of $\psi_1$, i.e., the elements of the minimum resilience set in the RHS of the above equation. This is equal to $\{(-1,2),(1,-1),(-2,3)\}$, representing the recoverability and durability values of $x$ w.r.t.\ $R_{\Trec,\Tdur}(x>0)$ at times $t'=0,5,13$.

On the other hand, we obtain a different ReSV if we consider the SRS formula $\psi_2= R_{\Trec,\Tdur}(\mathbf{G}_{[0,20]} x>0)$, where the $\mathbf{G}$ temporal operator is pushed inside the resiliency atom.
Since $x$ never satisfies $\mathbf{G}_{[0,20]} x>0$ (thus, it violates the property at time $0$ and never recovers), we have $t_{rec}(\mathbf{G}_{[0,20]} x>0,x,0)=|x|=25$ and  $t_{dur}(\mathbf{G}_{[0,20]} x>0,x,0)=0$, resulting in $r(\psi_2)=\{(-24,-2)\}$.
\end{example}

\begin{restatable}{proposition}{resvset}
\label{proposition:resvisnondominatedset}
The ReSV $\resv(\psi,\xi,t)$ of an SRS formula $\psi$ w.r.t.\ a signal $\xi$ at time~$t$ is a non-dominated set.
\end{restatable}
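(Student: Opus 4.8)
The plan is to proceed by structural induction on the SRS formula $\psi$, mirroring the inductive definition of $\resv$ in Definition~\ref{def:resv}. The base case and each inductive case should reduce to an application of Corollary~\ref{corollary:maxminsetarenondominated}, which already guarantees that $\max_{re}(P)$ and $\min_{re}(P)$ are non-empty, non-dominated sets for any nonempty $P\subseteq\mathbb{Z}^2$. So the real work is to confirm that every clause in the definition either outputs a singleton, applies $\max_{re}/\min_{re}$ to a nonempty set, or preserves non-domination under negation.

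First I would handle the base case: for an atom $\srs_{\Trec,\Tdur}(\varphi)$, Eq.~\eqref{eq:resv} produces a singleton set $\{(x_r,x_d)\}$, and a singleton is trivially non-dominated since the defining condition $x=_{re}y$ for all $x,y\in S$ holds vacuously (only one element). I should note here that $t_{rec}$ and $t_{dur}$ are well-defined integers because the sets in Eqs.~\eqref{eq:t_rec}–\eqref{eq:t_dur} always contain the fallback element $|\xi|-t$ (resp.\ $|\xi|-t'$), so $\resv$ is never the empty set at a leaf. For the inductive step, I would assume $\resv(\psi_1,\xi,t)$ and $\resv(\psi_2,\xi,t)$ (and $\resv(\psi,\xi,t')$ at each relevant $t'$) are non-dominated and nonempty, and treat each operator in turn. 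For $\wedge$, $\vee$, $\mathbf{G}_I$, and $\mathbf{F}_I$, the output is $\min_{re}$ or $\max_{re}$ applied to a finite nonempty union of inductively non-dominated sets; since that union is a nonempty subset of $\mathbb{Z}^2$, Corollary~\ref{corollary:maxminsetarenondominated} immediately gives that the result is a non-dominated (and nonempty) set. Nonemptiness of the union relies on $I$ being nonempty over $\mathbb{T}$ and on the bounded-time convention (Remark~\ref{rem:bounded_srs}) ensuring each inner $\resv(\psi,\xi,t')$ is defined.

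The negation case requires a separate argument, since $\resv(\neg\psi,\xi,t)=\{(-x,-y):(x,y)\in\resv(\psi,\xi,t)\}$ is defined by pointwise sign-flipping rather than by $\max_{re}/\min_{re}$. Here I would show that the map $(x,y)\mapsto(-x,-y)$ preserves the $=_{re}$ relation: flipping signs sends $\mathit{sign}$ to its negation, so $\mathit{sign}(x_r)+\mathit{sign}(x_d)=\mathit{sign}(y_r)+\mathit{sign}(y_d)$ is preserved (both sides negate), and Pareto-incomparability under $\succ$ is likewise preserved because negation reverses each coordinate's order simultaneously. Hence if no two elements of $\resv(\psi,\xi,t)$ are $\succ_{re}$-comparable, the same holds after negation, and the set remains non-dominated (and nonempty). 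The \emph{until} case is the main obstacle: its clause nests $\max_{re}$ over $t'$ of $\min_{re}$ of a union involving an inner $\min_{re}$ over $t''$. I would unwind it from the inside out—first arguing the innermost $\min_{re}\cup_{t''}\resv(\psi_1,\xi,t'')$ is non-dominated, then that unioning it with $\resv(\psi_2,\xi,t')$ and applying $\min_{re}$ yields a non-dominated set, and finally that the outer $\max_{re}\cup_{t'}$ of these yields a non-dominated set—each step invoking Corollary~\ref{corollary:maxminsetarenondominated} after checking the relevant union is nonempty. The only subtlety worth flagging is the degenerate subcase $t'=t$ where the interval $[t,t+t')=[t,t)$ is empty, making $\cup_{t''}\resv(\psi_1,\xi,t'')$ empty; I would note that $\min_{re}$ of an empty set should be treated as the identity for the subsequent union (so the $\psi_2$ term alone survives), matching the STL convention where an empty conjunction is vacuously true, and verify this does not break non-domination of the enclosing set.
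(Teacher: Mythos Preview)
Your proposal is correct and follows essentially the same approach as the paper: reduce every case (except negation) to Corollary~\ref{corollary:maxminsetarenondominated} by observing that the semantics is defined via $\max_{re}$ or $\min_{re}$, and handle negation separately by showing that $(x,y)\mapsto(-x,-y)$ preserves the $=_{re}$ relation. The paper's version is terser---it does not make the induction explicit, skips the nonemptiness bookkeeping, and ignores the degenerate $t'=t$ subcase of \emph{until}---but the underlying argument is the same.
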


A simple proof is given in Appendix~\ref{appendix:proposition},
showing that any ReSV is either a maximum or minimum resilience set and thus, a non-dominated set as per Corollary~\ref{corollary:maxminsetarenondominated}. 
We are now ready to state the main theorem of our work, which establishes the soundness and completeness of SRS logic; i.e., our semantics is consistent from a Boolean satisfiability standpoint. 

\begin{restatable}[Soundness and Completeness of SRS Semantics]
{theorem}{soundcomposite}
\label{theorem:soundness}
Let $\xi$ be a signal and $\psi$ an SRS specification.  
The following results at time $t$ hold:\\
1) $\exists\;x\in\resv(\psi,\xi,t)$ s.t.\ $x\succ_{re}\mathbf{0} \Longrightarrow (\xi,t)\models \psi$\\
2) $\exists\;x\in\resv(\psi,\xi,t)$ s.t.\ $x\prec_{re}\mathbf{0} \Longrightarrow (\xi,t)\models\neg\psi$\\
3) $(\xi,t)\models \psi \Longrightarrow  \exists\;x\in\resv(\psi,\xi,t)$ s.t.\ $x\succ_{re} \mathbf{0}$ or $x =_{re} \mathbf{0}$\\
4) $ (\xi,t)\models\neg\psi \Longrightarrow\exists\;x\in\resv(\psi,\xi,t)$ s.t.\ $x\prec_{re} \mathbf{0}$ or $x =_{re} \mathbf{0}$
\end{restatable}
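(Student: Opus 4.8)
The plan is to proceed by structural induction on $\psi$, after first collapsing the four claims into a single integer invariant. For $x=(x_r,x_d)\in\mathbb{Z}^2$ write $\sigma(x)=\mathrm{sign}(x_r)+\mathrm{sign}(x_d)\in\{-2,\dots,2\}$. Unwinding Definition~\ref{def:binaryrelation} against $\mathbf{0}=(0,0)$ shows that $x\succ_{re}\mathbf{0}\Leftrightarrow\sigma(x)>0$, $x\prec_{re}\mathbf{0}\Leftrightarrow\sigma(x)<0$, and $x=_{re}\mathbf{0}\Leftrightarrow\sigma(x)=0$ (the case $\sigma(x)=0$ with one strictly positive and one strictly negative coordinate collapses to $=_{re}$ because the Pareto test in clause~2 fails). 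By Proposition~\ref{proposition:resvisnondominatedset} every set $\resv(\psi,\xi,t)$ is non-dominated, so all its elements share one value of $\sigma$; call it $\ell(\psi,\xi,t)$, and recall the set is non-empty by Corollary~\ref{corollary:maxminsetarenondominated}. Hence the four existential statements become assertions about this single number: claim~1 is $\ell>0\Rightarrow(\xi,t)\models\psi$, claim~2 is $\ell<0\Rightarrow(\xi,t)\models\neg\psi$, claim~3 is $(\xi,t)\models\psi\Rightarrow\ell\ge0$, and claim~4 is $(\xi,t)\models\neg\psi\Rightarrow\ell\le0$. Since STL Boolean satisfaction is total, $(\xi,t)\models\neg\psi$ is exactly the negation of $(\xi,t)\models\psi$, so claims~3 and~4 are precisely the contrapositives of claims~2 and~1. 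It therefore suffices to prove claims~1 and~2.

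Next I would record how $\ell$ propagates through Definition~\ref{def:resv}. Because $\succ_{re}$ orders first by $\sigma$, the definitions of $\max_{re}$/$\min_{re}$ together with Corollary~\ref{corollary:maxminsetarenondominated} give, for any family of non-dominated sets, $\ell(\max_{re}\bigcup_i S_i)=\max_i\ell(S_i)$ and $\ell(\min_{re}\bigcup_i S_i)=\min_i\ell(S_i)$; negation flips coordinate signs, so $\ell(\neg\psi,\xi,t)=-\ell(\psi,\xi,t)$. Feeding these into Definition~\ref{def:resv} yields sign-level recurrences mirroring the STL robustness recurrences, e.g.\ $\ell(\psi_1\wedge\psi_2)=\min(\ell(\psi_1),\ell(\psi_2))$ and $\ell(\psi_1\mathbf{U}_I\psi_2,\xi,t)=\max_{t'\in t+I}\min\bigl(\ell(\psi_2,\xi,t'),\min_{t''\in[t,t')}\ell(\psi_1,\xi,t'')\bigr)$, with analogous forms for $\vee,\mathbf{F}_I,\mathbf{G}_I$. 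Claims~1 and~2 then follow by induction in lockstep: for $\wedge$, $\ell>0$ forces both $\ell(\psi_i)>0$ and claim~1 for the subformulas gives $(\xi,t)\models\psi_i$; for the until, $\ell>0$ exhibits a witness $t'$ with $\ell(\psi_2,\xi,t')>0$ and $\ell(\psi_1,\xi,t'')>0$ on $[t,t')$, matching the Boolean until, while claim~2 reads off a falsifying disjunct at every $t'$. The negation case is where claims~1 and~2 exchange roles, which is exactly why they must be carried together.

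The substantive work is the atomic base case. Here $\resv(\srs_{\Trec,\Tdur}(\varphi),\xi,t)=\{(\Trec-t_{rec},\,t_{dur}-\Tdur)\}$ is a singleton, so $\ell=\mathrm{sign}(\Trec-t_{rec})+\mathrm{sign}(t_{dur}-\Tdur)$, and I would establish the exact equivalence
\[
(\xi,t)\models\srs_{\Trec,\Tdur}(\varphi)\iff t_{rec}\le\Trec\ \text{and}\ t_{dur}\ge\Tdur\iff \Trec-t_{rec}\ge0\ \text{and}\ t_{dur}-\Tdur\ge0 .
\]
The crux is to unfold $\srs_{\Trec,\Tdur}(\varphi)=\neg\varphi\,\mathbf{U}_{[0,\Trec]}\mathbf{G}_{[0,\Tdur)}\varphi$ and argue that the \emph{only} possible until-witness is the first recovery time $t'=t+t_{rec}$: any earlier time fails $\mathbf{G}_{[0,\Tdur)}\varphi$, and any later time violates the $\neg\varphi$ prefix. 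Given this, the until bound $t'\le t+\Trec$ is exactly $t_{rec}\le\Trec$, and $\mathbf{G}_{[0,\Tdur)}\varphi$ at $t'$ (i.e.\ $\varphi$ holds on the $\Tdur$ discrete steps of $[t',t'+\Tdur)$) is exactly $t_{dur}\ge\Tdur$ by Eq.~\eqref{eq:t_dur}. From the biconditional, claim~1 ($\ell>0$ forces both coordinates $\ge0$, hence both bounds, hence satisfaction) and claim~2 ($\ell<0$ forces a negative coordinate, hence a failed bound) are immediate for the atom.

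I expect this base case to be the main obstacle, for two reasons. First, the first-recovery-witness argument must be verified rigorously in the discrete-time, semi-closed-$\mathbf{G}$ setting, checking that $t_{dur}$ as defined in Eq.~\eqref{eq:t_dur} counts exactly the consecutive steps on which $\varphi$ holds after $t'$. Second, $t_{rec}$ and $t_{dur}$ cap at the signal horizon, so I would invoke the bounded-time/signal-extension convention (Remark~\ref{rem:bounded_srs}) to ensure the caps never make an unrecovered or non-durable signal spuriously look satisfied; concretely, when $\varphi$ never recovers within the evaluated horizon the cap must exceed $\Trec$ so that $\Trec-t_{rec}<0$. Once the atomic equivalence and the sign-level propagation facts are in place, the inductive cases reduce to routine $\min$/$\max$ bookkeeping.
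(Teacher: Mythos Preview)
Your proposal is correct and takes a genuinely cleaner route than the paper. The key difference is your introduction of the scalar level $\ell(\psi,\xi,t)=\sigma(x)$ for any $x$ in the (non-dominated) ReSV set. The paper never isolates this invariant; instead it establishes eight auxiliary facts (i)--(viii) about the relations $\succ_{re},=_{re},\prec_{re}$ versus $\mathbf{0}$ and then proves statements~1) and~3) by separate structural inductions, declaring~2) and~4) analogous. Your observation that~3) and~4) are literal contrapositives of~2) and~1) once everything is phrased via $\ell$ halves the work, and your $\min/\max$ recurrences for $\ell$ absorb most of what the paper's auxiliary results do piecemeal (e.g., the paper's result~viii), that one element $\succ_{re}\mathbf{0}$ forces all elements $\succ_{re}\mathbf{0}$, is exactly your constancy of $\sigma$ on non-dominated sets). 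For the atomic base case the two arguments coincide in content: the paper also takes the until-witness $t_1=t_{rec}$ and reads off $\mathbf{G}_{[t,t+t_1)}\neg\varphi$ and $\mathbf{G}_{[t+t_1,t+t_1+\Tdur)}\varphi$ from Eqs.~\eqref{eq:t_rec}--\eqref{eq:t_dur}, though it does not spell out your ``only possible witness'' uniqueness argument. Your explicit flagging of the horizon-cap edge case is apt; the paper's proof is silent on it and implicitly leans on the same bounded-time convention (Remark~\ref{rem:bounded_srs}) you invoke. In short, your approach buys compactness and a transparent reduction to integer $\min/\max$ reasoning; the paper's approach is more explicit at each step but correspondingly more repetitive.
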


A proof is given in Appendix~\ref{appendix:theorem}. 
Since every ReSV $\resv(\psi,\xi,t)$ is a non-dominated set, then  $\exists\;x\in\resv(\psi,\xi,t)$ s.t.\ $x\succ_{re}\mathbf{0}$ implies that  $x\succ_{re}\mathbf{0}$ holds for all $x\in\resv(\psi,\xi,t)$ (see the proof of Theorem~\ref{theorem:soundness} for further details).

\section{Case Study}\label{sec:case}

In this section, we demonstrate the utility of our SRS logic and ReSV semantics on two case studies.
We use Breach~\cite{donze2010breach} for formulating STL formulas and evaluating their STL Boolean semantics.  Experiments were performed on an Intel Core i7-8750H CPU @ 2.20GHz with 16GB of RAM and Windows~10 operating system. Our resiliency framework has been implemented in MATLAB; our implementation along with our case studies can be found in a publicly-available library.\footnote{See  \url{https://github.com/hongkaichensbu/resiliency-specs} }

\subsection{UAV Package Delivery}
\label{subsec:uav}

\begin{figure}[ht]
    \vspace{-4ex}
    \subfloat[]{\includegraphics[width=.45\linewidth]{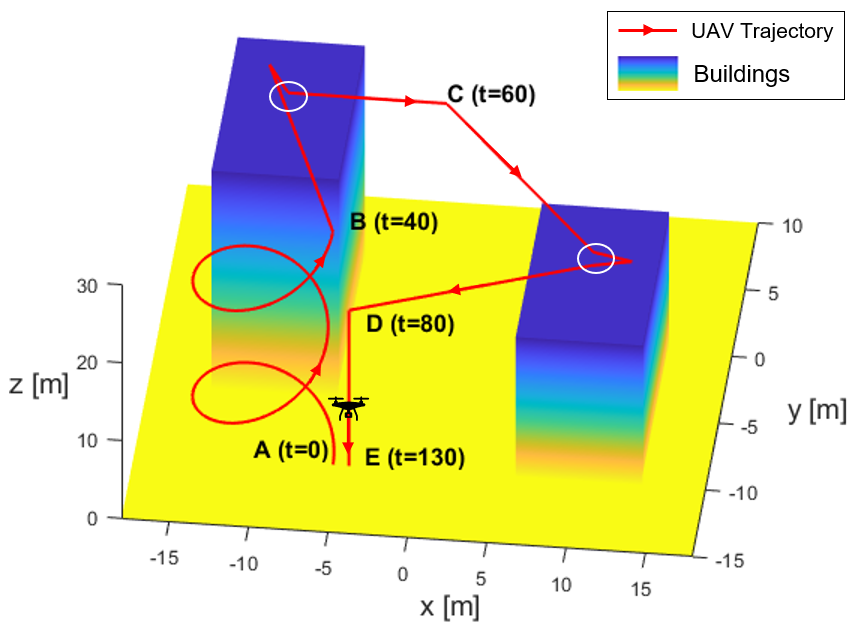}}\hfill
    \subfloat[]{\includegraphics[width=.45\linewidth]{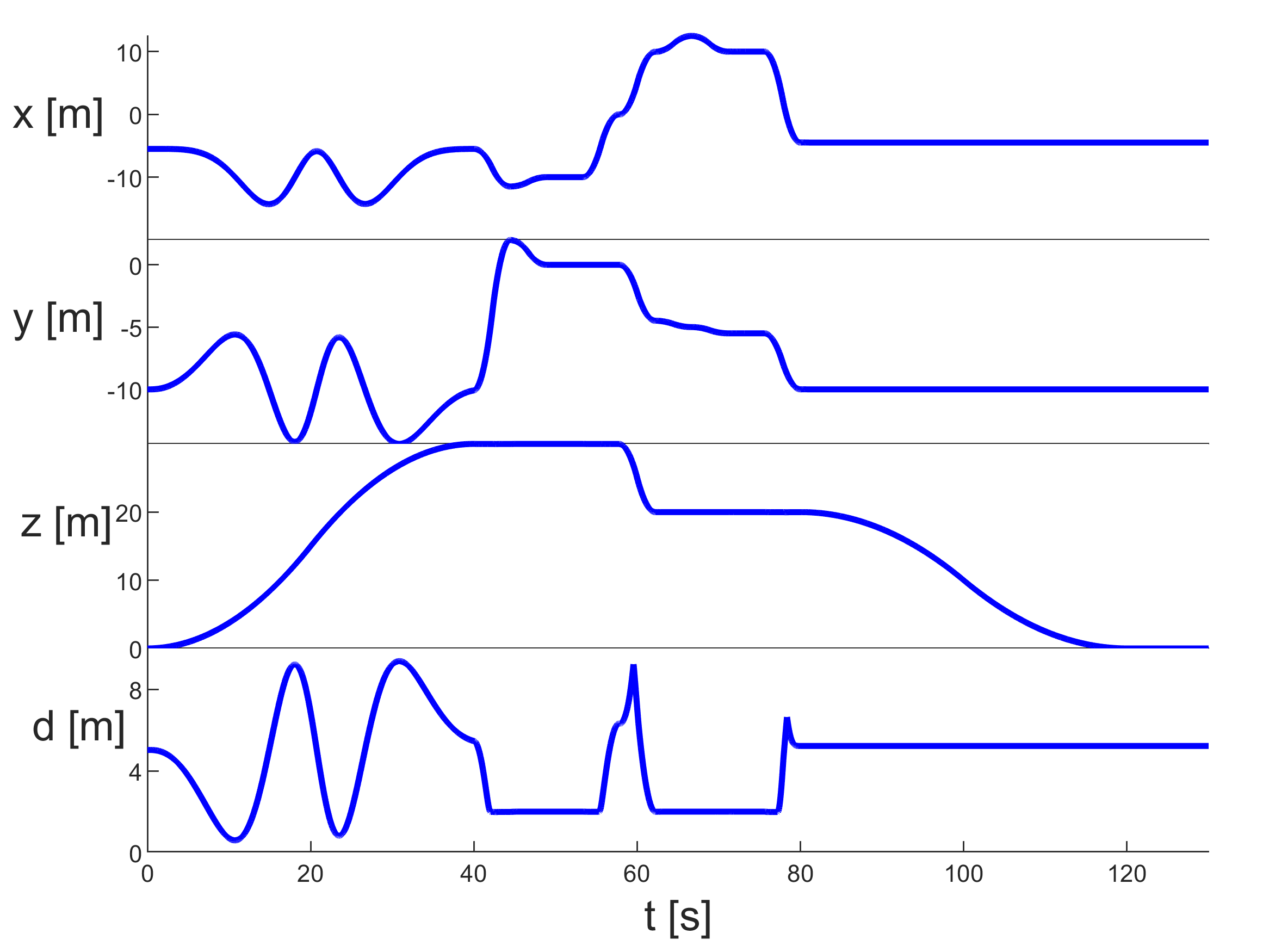}}
    \vspace{-2.5ex}
    \caption{(a)~UAV trajectory and layout of buildings. (b)~UAV coordinates and distance to closest building.}
    \vspace{-5ex}
    \label{fig:uav_case}
\end{figure}
Our first case study involves package delivery via an unmanned aerial vehicle (UAV).  
We consider a quadrotor UAV model with a proportional–integral–deri\-vative~(PID) controller~\cite{michael2010grasp} and use $\mathbf{p} = (x,y,z)$ to denote the UAV's 3-D position vector.  The UAV is tasked with dropping off packages on the rooftops of two adjacent buildings; a simulated trajectory of the UAV along with various points of interest along the trajectory are given in Figure~\ref{fig:uav_case}(a).  As can be seen, the segment of the trajectory from A to B is spiral-like/unstable, and can be attributed to the disturbance caused by the load it is carrying (a disturbance that dissipates once the UAV reaches the building's rooftop).   The coordinates of the UAV and its distance to the closest building are plotted over time in Figure~\ref{fig:uav_case}(b). The (discrete) trajectory is 130 seconds long, i.e., $T=130$ secs, with a time-step of 1/20 secs.  We use the following STL formulas to specify the UAV's mission.\\
[0.5ex]
\emph{Height regulation:} the UAV should remain below 
$H_{max}=120$ meters~\cite{droneheightregulation}. 
\begin{align*}
    \varphi_1 =  \mathbf{G}_{[0,T]}\,\varphi'_1,\hspace{1.5em}  \varphi'_1=(z \leq H_{max})
\end{align*}
\emph{Package delivery:} the UAV needs to hover at a delivery location for a specified period of time (delivery locations are closed $\mathcal{L}^2$-balls with radius $\epsilon=1$ centered at $C_1,C_2$).  The locations are set to $C_1=(-10,0,30)$, $C_2=(10,-5,20)$, and $||\cdot||$ denotes the $\mathcal{L}^2$-norm.
\begin{align*}
\varphi_2 &=  \mathbf{F}_{[0,43]} \mathbf{G}_{[0,1]} \,\varphi'_2,\hspace{1.5em}\varphi'_2 = (||\mathbf{p}-C_1|| \leq \epsilon)\\
\varphi_3 &= \mathbf{F}_{[0,65]} \mathbf{G}_{[0,3]}\,\varphi'_3,\hspace{1.5em} \varphi'_3=(||\mathbf{p}-C_2|| \leq \epsilon)
\end{align*}
\emph{Collision avoidance:} the UAV should maintain a minimum distance $d_{min}=1.5$ meters to the closest building. It is violated repeatedly in the spiral ascent from A to B.
\begin{align*}
\varphi_4 = \mathbf{G}_{[0,T]}\,\varphi'_4,\hspace{1.5em} \varphi'_4= (d\geq d_{min})
\end{align*}

We compute the ReSV values of the above STL formulas when both $\varphi_i$ and $\varphi'_i$ are used as SRS atoms. In particular, in Table~\ref{tab:uav1}, we consider expressions of the form $R_{\Trec,\Tdur}(\varphi_i)$, i.e., where the temporal operators appear inside the SRS atoms, while in Table~\ref{tab:uav2}, nested $\varphi'_i$ expressions are replaced by $R_{\Trec,\Tdur}(\varphi'_i)$ expressions in $\varphi_i$, i.e., temporal operators appear outside the SRS atoms. Syntactically, this difference is
subtle, but as we show, it is of significant importance semantically. 
We assume the UAV state is unchanged after the trajectory ends, when a longer trajectory is needed to determine satisfiability (see Remark~\ref{rem:bounded_srs}).
We choose $\Trec,\Tdur=4$ for the SRS atoms so that we can illustrate repeated property violation and recovery.

\begin{table}[ht]
\vspace{-2ex}
\centering
\begin{tabular}{|>{\centering\arraybackslash}p{3cm}|>{\centering\arraybackslash}p{3cm}|>{\centering\arraybackslash}p{3cm}|}
\hline
{SRS formula} & {$r(\psi_i,\xi,0)$}
& Exec. time (sec) \\ \hline
{$\psi_1 = R_{4,4}(\varphi_1)$} & {\{(4, 126)\} } & {14.69}     \\ \hline
{$\psi_2 = R_{4, 4}(\varphi_2)$}  & {\{(0.1, 45.5)\}}  & {12.74}  \\ \hline
{$\psi_3 = R_{4,4}(\varphi_3)$}   & {\{(-0.2, 65.05)\}}  & {12.53}   \\ \hline
{$\psi_4 = R_{4,4}(\varphi_4)$}    &     {\{(-73.3, 48.7)\}} & {14.18}   \\ \hline
\end{tabular}
\vspace{1ex}
\caption{SRS expressions of the form $R_{\Trec,\Tdur}(\varphi_i)$ for UAV properties $\varphi_i$. All $r$-values are computed using trajectory $\xi$ of Figure~\ref{fig:uav_case}(a) at time~0.}
\vspace{-6ex}
\label{tab:uav1}
\end{table}

\begin{table}[ht]
\centering
\vspace{-3ex}
\begin{tabular}{@{}|c|c|c|c|c|@{}}
\hline
SRS formula & $r(\psi'_i,\xi,0)$ & \,Corresponding SRS atoms\,  & \,Exec.\ time (sec)\,  \\ \hline
$\psi'_1 = \mathbf{G}_{[0,T]} R_{4,4}(\varphi'_1)$ & \{(4, -4)\} & $r(R_{4, 4}(\varphi'_1),\xi,130)$      &  8.50   \\ \hline
$\psi'_2 = \mathbf{F}_{[0,43]} \mathbf{G}_{[0,1]} R_{4,4}(\varphi'_2)$      & \begin{tabular}[c]{@{}c@{}}\{(3.95, -3.5),\\ (4, -3.55)\}\end{tabular}    & \begin{tabular}[c]{@{}c@{}}$r(R_{4, 4}(\varphi'_2),\xi,42.85)$,\\ $r(R_{4,4}(\varphi'_2),\xi,42.95)$\end{tabular}  &  18.84 \\ \hline
$\psi'_3 = \mathbf{F}_{[0,65]} \mathbf{G}_{[0,3]} R_{4,4}(\varphi'_3)$      & \begin{tabular}[c]{@{}c@{}}\{(3.95, -1.35),\\(4, -1.4),\\ (-0.2, 3.05)\}\end{tabular} & \begin{tabular}[c]{@{}c@{}}$r(R_{4, 4}(\varphi'_3),\xi,61.5)$,\\ $r(R_{4,4}(\varphi'_3),\xi,61.6)$,\\ $r(R_{4,4}(\varphi'_3),\xi,65)$\end{tabular}    & 26.70    \\ \hline
$\psi'_4 = \mathbf{G}_{[0,T]} R_{4, 4}(\varphi'_4)$      &  \begin{tabular}[c]{@{}c@{}}\{(-0.25, 5.8),\\(4,-4)\}\end{tabular}   & \begin{tabular}[c]{@{}c@{}}$r(R_{4,4}(\varphi'_4),\xi,8.4)$,\\$r(R_{4,4}(\varphi'_4),\xi,130)$
\end{tabular}    & 8.93
\\ \hline
\end{tabular}
\vspace{1ex}
\caption{Nested $\varphi'_i$ expressions replaced by $R_{\Trec,\Tdur}(\varphi'_i)$ expressions in UAV properties $\varphi_i$. }
\vspace{-4ex}
\label{tab:uav2}
\end{table}

In Table~\ref{tab:uav1}, $r(\psi_1, \xi, 0)=\{(4, 126)\}$ reflects the UAV's resilience w.r.t.\ $\varphi_1$ (height regulation) as it holds in $[0,T]$, thus reaching its maximum $\mathit{rec}$ and $\mathit{dur}$.
Entry $r(\psi_3, \xi, 0)=\{(-0.2, 65.05)\}$ considers the resilience of $\varphi_3$: it is false at time $0$ but becomes true at time $4.2$, making recovery 0.2 secs slower than
$\Trec =4$; 
$\varphi_3$ then remains true until time $73.25$, resulting a durational period 65.05 secs longer than $\Tdur =4$.

Table~\ref{tab:uav2} includes an extra column (the third one) showing the SRS atoms corresponding to each $(\mathit{rec}, \mathit{dur})$ pair in the formula's ReSV (second column). In the first row, the ReSV of $\psi_1'$ is $r(\psi'_1,\xi,0)=\{(4, -4)\}$. Because of the outermost $\mathbf{G}_{[0,T]}$ operator in $\psi'_1$, $(4, -4)$ represents the worst-case recovery episode (relative to the STL property $\varphi'_1$) within the interval $[0,T]$, meaning that $(4, -4)$ is dominated by every other $(\mathit{rec}, \mathit{dur})$ pair in $[0,T]$. The third column tells us that such episode happens at time $t=130$. 
The $(rec,dur)$ values in entry $r(\psi'_3, \xi, 0) = \{(3.95, -1.35), (4, -1.4), (-0.2, 3.05)\}$ represent the best-case episodes~(due to the outermost $\mathbf{F}$ operator) within $t\in [0,65]$ of the ReSV of $\mathbf{G}_{[0,3]} R_{4,4}(\varphi_3')$, which in turn, gives us the worst-case episodes~(due to the inner $\mathbf{G}$ operator) of the ReSV of $R_{4,4}(\varphi_3')$ within $[t,t+3]$.

Even though there are some (relatively small) negative values in Table~\ref{tab:uav2}, overall, the results of Table~\ref{tab:uav2} are consistent with those of  Table~\ref{tab:uav1}, thereby reflecting the overall resiliency of the UAV package-delivery mission. Let us remark the difference between the results of Table~\ref{tab:uav1} and~\ref{tab:uav2}, i.e., between the ReSVs of $\psi_i$ and $\psi'_i$. The former are SRS atoms relative to a composite STL formula $\varphi_i$, and so their ReSVs are singleton sets representing the first recovery episode of $\varphi_i$. The latter are composite SRS formulas relative to an atomic STL formula $\varphi'_i$. As such, their ReSVs are obtained inductively following the structure of the SRS formulas and thus, they may include multiple non-dominated pairs.

We observe that execution times are largely affected by the size of the intervals in the temporal operators appearing in the SRS and STL formulas: computing $\psi_2$ and $\psi_3$ in Table~\ref{tab:uav1} ($\psi'_2$ and $\psi'_3$ in Table~\ref{tab:uav2}) is more efficient than computing $\psi_1$ and $\psi_4$~($\psi'_1$ and $\psi'_4$), even though the former expressions involve nested temporal operators. Indeed, the interval size directly affects both the number of subformula evaluations and the size of the ReSV set. Moreover, our ReSV algorithm uses an implementation of the \emph{always} and \emph{eventually} operators that 
makes them particularly efficient when applied to (atomic) subformulas that exhibit few recovery episodes~(e.g., see the $\mathbf{G}_{[0,T]}$ operator in $\psi'_1$ in Table~\ref{tab:uav2}).

\subsection{Multi-Agent Flocking}

We consider the problem of multi-agent flock formation and maintenance in the presence of external disturbances.  We use the rule-based Reynolds flocking model 
(see Appendix~\ref{appendix:flockingmodel})
involving boids $\mathcal{B}=\{1,\ldots,n\}$ in $m$-dimensional space.  Boid $i$'s position is ${x}_i\in\mathbb{R}^m$, $\mathbf{x} = [x_1,\ldots,x_n]\in\mathbb{R}^{m\cdot n}$ is a global configuration vector, and $\xi =[\mathbf{x}(1),\ldots,\mathbf{x}(k)]\in\mathbb{R}^{m\cdot n\cdot k}$ is a $k$-step trajectory.

We consider a 500-second simulation (trajectory) of a 30-boid flock with a time-step of 0.1 sec in a 2-D plane; so, $T = 500$ secs. 
We subject $20$ of the boids to an intermittent uniformly random displacement~\cite{grosu2020v} from $[0,M]\times[0,2\pi]$, where $M=20$ meters and $2\pi$ are the maximum magnitude and direction of the displacement, respectively. The subset of $20$ boids is chosen uniformly at random during the intervals $[100, 150]$, $[250, 300]$, and $[400, 450]$ in seconds. The simulation starts with the boids at random positions with random velocities, both sampled within some bounded intervals. 

The relevant STL specifications for the flocking mission are the following.\\
[0.5ex]
\emph{Flock formation:} 
a cost function $J(\mathbf{x})$ consisting of a cohesion and a separation term determines whether the boids form a flock~\cite{mehmood2018declarative}:
\begin{align*}
J(\mathbf{x}) = \frac{1}{|\mathcal{B}|\cdot(|\mathcal{B}|-1)} \cdot\sum_{i\in \mathcal{B}}\sum_{j\in \mathcal{B},i<j}||x_{ij}||^2 +\omega\cdot\sum_{(i,j)=\mathcal{E}(\mathbf{x})}\frac{1}{||x_{ij}||^2}
\end{align*}
\noindent where $x_{ij}=x_i-x_j$, $\omega=1/100$, and $\mathcal{E}(\mathbf{x})$ is the set of neighboring boid pairs within an interaction radius $r_c=25$ meters: $\mathcal{E}(\mathbf{x}) = \{(i,j)\in \mathcal{B}^2	\mid||x_{ij}||<r_c, i \not= j\}$.
$J(\mathbf{x})\leq\delta$, $\delta=500$, implies that flock formation has been obtained.
\begin{align*}
    \varphi_1 &= \mathbf{G}_{[0,500]}\,\mathbf{F}_{[0,60]}\,\varphi'_1,\hspace{1.5em} \varphi'_1= (J(\mathbf{x}) \leq \delta)
\end{align*}
Over the whole trajectory, the flock formation should always be obtained in a timely fashion. This is to be expected in the present of recurrent disturbances to the flock.\\
[0.5ex]
\emph{Connected components:} the number of connected components $|CC(\mathbf{x})|$ of the proximity net $\mathcal{G}(\mathbf{x}) = (\mathcal{B}, \mathcal{E}(\mathbf{x}))$ represents potential fragmentation of the flock .  Ideally, $|CC(\mathbf{x})|$ should remain at~1 after flock formation.
\begin{align*}
\varphi_2 &= \mathbf{G}_{[0,500]}\,\mathbf{F}_{[0,60]}\,\varphi'_2,\hspace{1.5em}  \varphi'_2=(|CC(\mathbf{x})|=1)
\end{align*}

\begin{figure}[t]
	\centering
	\includegraphics[width=\linewidth]{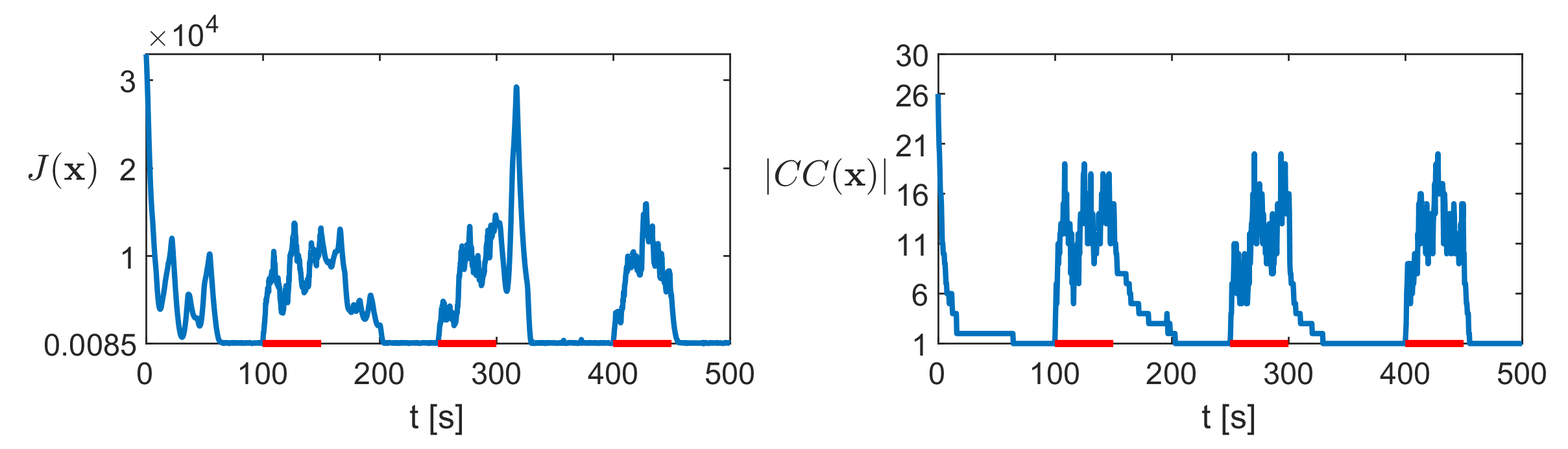}
	\vspace{-6ex}
	\caption{$J(\mathbf{x})$ and $|CC(\mathbf{x})|$ for 30 boids. Red portions of x-axes indicate intervals of random displacement.}
	\label{fig:flock_traj}
\end{figure}

\begin{figure}[t]
	\centering
	\includegraphics[width=.8\linewidth]{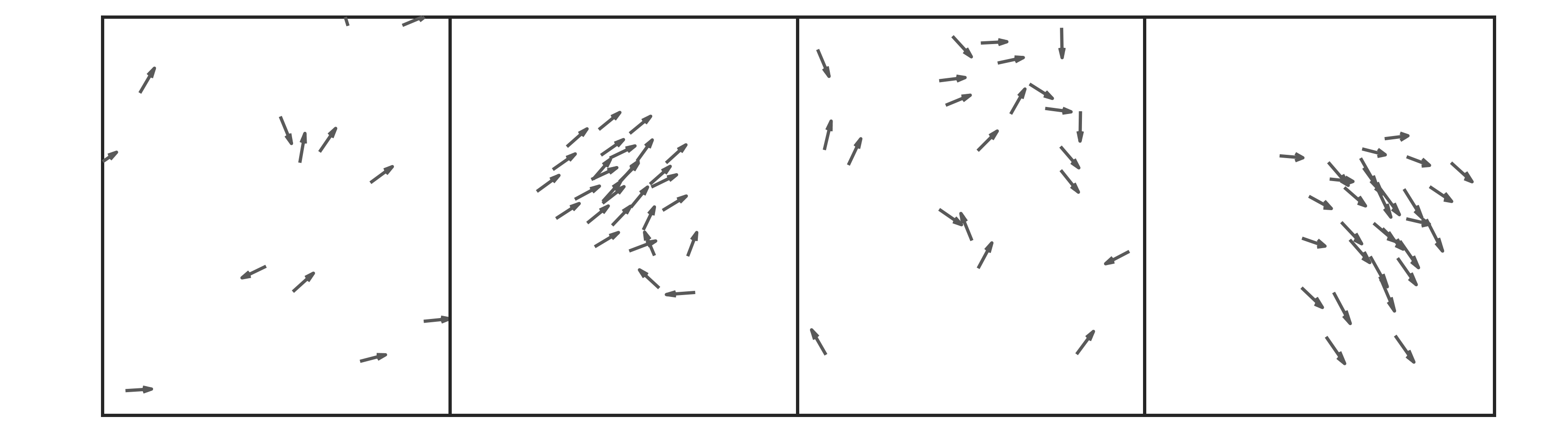}
	\vspace{-3.5ex}
	\caption{Flock simulation snapshots at times $0$, $75$, $300$, and $350$ (left to right).}
	\label{fig:flock_snapshot}
\end{figure}

The functions $J(\mathbf{x})$ and $|CC(\mathbf{x})|$ are plotted in Figure~\ref{fig:flock_traj}. 
Figure~\ref{fig:flock_snapshot} shows flock formation at times $75$ and $350$ where $(J(\mathbf{x}),\;t)\models\varphi'_1$ and $(|CC(\mathbf{x})|,\;t)\models\varphi'_2$, and times $0$ and $300$ where $(J(\mathbf{x}),\;t)\not\models\varphi'_1$ and $(|CC(\mathbf{x})|,\;t)\not\models\varphi'_2$.
Similar to Section~\ref{subsec:uav}, we consider two cases of SRS specifications.
As in Section~\ref{subsec:uav}, we compute the ReSV values of $\varphi_1$, $\varphi_2$ in Table~\ref{tab:flock1} and Table~\ref{tab:flock2}. We select time bounds that are consistent with the timescales of flock formation, i.e., $\Trec, \Tdur = 30$. 

As shown in Table~\ref{tab:flock1}, $r(\psi_2,\xi,0)=\{(-239.3, 200.7)\}$, meaning that $\varphi_2$ is false at time $0$ but becomes true at time $269.3$ (i.e., $239.3$ seconds later than $\Trec=30$) and remains true until the end of the trajectory (i.e., interval $[269.3,T]$ is $200.7$ seconds longer than $\Tdur=30$). 
In Table~\ref{tab:flock2}, $r(\psi'_2,\xi,0)=\{(-42.7, 17.2), (30, -30)\}$; i.e., the $(\mathit{rec}, \mathit{dur})$ pairs at times $130.2$ and $500$, representing the worst recovery episodes in $t\in[0,T]$, which are also the best episodes in the interval $[t,t+60]$. 
Overall, our results show
the resilience of Reynolds flocking model to repeated random disturbances.

\begin{table}[t]
\centering
\begin{tabular}{|>{\centering\arraybackslash}p{3cm}|>{\centering\arraybackslash}p{3cm}|>{\centering\arraybackslash}p{3cm}|}
\hline
SRS formula & $r(\psi_i,\xi,0)$ & Exec. time (sec)  \\ \hline
$\psi_1 = R_{30,30}(\varphi_1)$   & \{(-239.4, 200.6)\}  & 40.47     \\ \hline
$\psi_2 = R_{30,30}(\varphi_2)$ & \{(-239.3, 200.7)\}  &  39.75 
\\ \hline
\end{tabular}
\vspace{1ex}
\caption{SRS expressions of the form $R_{\Trec,\Tdur}(\varphi_i)$ for flocking properties $\varphi_i$. All $r$-values are computed using trajectory $\xi$ of Figure~\ref{fig:flock_traj} at time~0.}
\vspace{-6ex}
\label{tab:flock1}
\end{table}

\begin{table}[t]
\centering
\begin{tabular}{@{}|c|c|c|c|c|@{}}
\hline
SRS formula & $r(\psi'_i,\xi,0)$ &  \,Corresponding SRS atoms\, & \,Exec.\ time (sec)\,  \\ \hline
\begin{tabular}[c]{@{}l@{}}$\psi'_1 =\mathbf{G}_{[0,500]}(\mathbf{F}_{[0,60]}$ \\\hspace{5.3em}$ R_{30,30}(\varphi'_1))$ \end{tabular}  & \begin{tabular}[c]{@{}c@{}}\{(-42.1, 17.8),\\(-19, 13.4),\\(-21.8, 15.3),\\(30,-30)\}\end{tabular} & \begin{tabular}[c]{@{}c@{}}$r(R_{30, 30}(\varphi'_1),\xi,130.4)$,\\$r(R_{30, 30}(\varphi'_1),\xi,280.4)$,\\$r(R_{30, 30}(\varphi'_1),\xi,402.9)$,\\ $r(R_{30, 30}(\varphi'_1),\xi,500)$
\end{tabular}   & 291.72\\ \hline
\begin{tabular}[c]{@{}l@{}}$\psi'_2 =\mathbf{G}_{[0,500]}(\mathbf{F}_{[0,60]}$ \\\hspace{5.3em}$ R_{30,30}(\varphi'_2))$ \end{tabular} & \begin{tabular}[c]{@{}c@{}}\{(-42.7,17.2),\\(30, -30)\}\end{tabular} & \begin{tabular}[c]{@{}c@{}}$r(R_{30, 30}(\varphi'_2),\xi,130.2)$,\\$r(R_{30, 30}(\varphi'_2),\xi,500)$
\end{tabular}      &  279.54   
\\ \hline
\end{tabular}
\vspace{1ex}
\caption{Nested $\varphi'_i$ expressions replaced by $R_{\Trec,\Tdur}(\varphi'_i)$ expressions in flocking properties~$\varphi_i$.}
\vspace{-3ex}
\label{tab:flock2}
\end{table}

\section{Related Work} \label{sec:relatedwork}

Resiliency has been studied in different system engineering contexts including computer hardware~\cite{hari2017sassifi}, communication~\cite{tan2020new}, distributed systems~\cite{prokhorenko2020architectural}, cyber-security~\cite{yuan2020resilient}, and model checking~\cite{selyunin2017runtime}.

In the context of cyber-physical systems, the literature on resilience is diverse, both in the approach taken and terminology used~\cite{aksaray2021resilient,bouvier2021quantitative,zhu2011robust,mehdipour2021resilience}.
Among the logic-based approaches, standard STL robustness provides a notion of the extent to which a signal can be perturbed in space before affecting property satisfaction, which is seen by some authors as a form of resilience~\cite{mehdipour2021resilience}. STL time robustness~\cite{donze2010robust,rodionova2021time} is the equivalent notion when perturbations in time (forward or backward) are considered. There are similarities between (right) time robustness and our notion of recoverability, but the two semantics are fundamentally different as our ReSVs include a second dimension, durability. See Remark~\ref{rem:time_rob} for more details on this comparison. 
Aksaray et al.~\cite{aksaray2021resilient} propose a (time-) ``shifting'' version of STL and a resilient controller to maximize the robustness value of the shifted formula as fast as possible. This approach, however, 
only supports $\mathbf{G}\mathbf{F}$-formulas and does not provide a dedicated ``shifting'' semantics. 

Control-theoretic characterizations of resiliency includes Bouvier et al.~\cite{bouvier2021quantitative}, which defines a resilience measure (optimized by the controller) based on the additional time it will take for a system to reach a target if under malfunctions. The control framework of Zhu et al.~\cite{zhu2011robust} defines a resilient system as one that can restore its state after extreme events caused by specific perturbation classes. In contrast, we answer the question ``What is resilience in CPS?'' from a temporal-logic perspective. Namely, we provide a syntax for CPS resilience using STL, and a corresponding quantitative semantics. Using our framework in optimal control is not the focus of the present work,  but it is a natural continuation. 


\revision{Another related line of work involves policy and parameter synthesis under multi-objective temporal-logic specifications~\cite{etessami2007multi,chen2013stochastic,calinescu2018efficient,bakhirkin2019paretolib}. Even though we similarly consider multiple requirements (namely recoverability and durability), there is an important difference: we use the satisfaction degrees of recoverability and durability to define the semantics \revision{of} an STL-based logic, with the semantics of composite formulas derived via Pareto optimization. On the other hand, the above mentioned related work adopt, for each requirement, the usual (Boolean, probabilistic, or quantitative) semantics.  

Finally, parametric STL~\cite{asarin2011parametric,bakhirkin2018efficient} has, akin to our work, a set-based semantics. In this case, the semantics is given by the set of parameter evaluations $p$ for which the resulting concrete formula (instantiated with $p$) is satisfied by the given signal. We remark that, in our approach, the specification is fixed, non-parametric, and the set-based semantics arises from the fact that our recoverability-durability pairs might not be directly comparable, i.e., mutually non-dominated.}

\section{Conclusion} \label{sec:conclusion}

In this paper, we presented a logic-based framework to reason about CPS resiliency. We define resiliency of an STL formula $\varphi$ as the ability of the system to recover from violations of $\varphi$ in a timely and durable manner. These requirements represent the atoms of our formally defined
SRS logic, which allows to combine such resiliency statements using temporal and Boolean operators. We also introduced ReSV, the first multi-dimensional semantics for an STL-based logic. Under this semantics, an SRS formula is interpreted as a set of non-dominated $(\mathit{rec},\mathit{dur})$ pairs, which respectively quantify
how quickly the underlying system recovers from a property violation and for how long it satisfies the property thereafter. Importantly, we proved that our ReSV semantics is sound and complete w.r.t.\ the Boolean semantics of STL. 
We illustrated our new resiliency framework with two case studies: UAV package delivery and flock formation. 
Collectively, our results demonstrate the expressive power and flexibility of our framework in reasoning about resiliency in CPS. 

In summary, the contribution of our work is not just establishing theoretical foundations of CPS resiliency but also providing a method to equip temporal logics with multi-dimensional semantics, an approach that in the future could be extended to support arbitrary multi-requirement specifications beyond resiliency.

\subsubsection{Acknowledgments.} We thank the anonymous reviewers for their valuable feedback and suggestions for improving the quality of this paper.  Research supported in part by NSF CNS-1952096, CNS-1553273 (CAREER), OIA-2134840, OIA-2040599, CCF-1918225, and CPS-1446832.

\bibliography{mybibfile}

\newpage
\appendix
\section{Proof of Corollaries}\label{appendix:corollary}



\maxminsetone*
\vspace{-1ex}
\begin{proof}
Let $S$ be the maximum resilience set of set $P$ \revision{with $P\neq \emptyset$}.

\revision{
We prove that $S$ is non-empty by induction.
\begin{enumerate}
    \item Assume $P$ is a singleton set. It is trivial that $S$ is non-empty.
    \item For $P\neq \emptyset$, let $S\neq \emptyset$ be the maximum resilience set of $P$ and $r \in \mathbb{Z}^2$.  Let $S'$ be the maximum resilience set of  $P\cup\{r\}$.  If $r\in P$, it is trivial that $S' = S$, and hence $S'\neq \emptyset$. If $r\not\in P$, then:
    \begin{itemize}
        \item If exists $s \in S$ s.t.\ $s\succ_{re}r$ or $s=_{re}r$, then we have $s\in S'$ (given that $S$ is the maximum resilience set of $P$, then $s\succ_{re}p'$ or $s=_{re}p'$ for all $p'\in P\cup\{r\}$).
        \item Otherwise, we have $s\prec_{re}r$ for all $s\in S$. In this case, for all $p\in P$ and $s\in S$, it holds that either (1)~$p\prec_{re}s$, which, by transitivity of $\prec_{re}$, implies that $p \prec_{re} r$;  or (2)~$p=_{re}s$, in which case  $p\prec_{re}r$ or $p=_{re}r$ (i.e., $p\not\succ_{re}r$): indeed if $p\succ_{re}r$ held, then by transitivity of $\succ_{re}$, we would have $p\succ_{re}s$, which contradicts the assumption. 
        In either case, $r\in S'$.
        \item Therefore, $S'$ is non-empty.
    \end{itemize}
    \item By induction, $S$ is non-empty for any $P\neq \emptyset$.
\end{enumerate}

We then prove that $S$ is non-dominated. Let $s_1,s_2\in S$.
}
\begin{enumerate}
    \item Because $s_1\in S$ and $s_2\in S \subseteq P$, we have $s_1\succ_{re}s_2$ or $s_1=_{re}s_2$.  Similarly, we have $s_2\succ_{re}s_1$ or $s_2=_{re}s_1$. \revision{Therefore,} $s_1=_{re}s_2$.
    \item By definition, $S$ is a non-dominated set.
\end{enumerate}

Minimum resilience sets can be proven to be \revision{non-empty and} non-dominated sets in a similar manner. \qed
\end{proof}



\section{Proof of Propositions}\label{appendix:proposition}

\resvset*
\vspace{-1ex}
\begin{proof}
We show that for any SRS $\psi$ and signal $\xi$, $r(\psi,\xi,t)$ is either a maximum or minimum resilience set \revision{of a non-empty set}. By Corollary~\ref{corollary:maxminsetarenondominated}, this implies that $r(\psi,\xi,t)$ is a \revision{non-empty and} non-dominated set.

If $\psi$ is an SRS atom, then $r(\psi,\xi,t)$ is a singleton set, and thus it is trivially both the maximum and minimum resilience set of itself. When $\psi$ is of the form $\psi_1\wedge\psi_2$, $\psi_1\vee\psi_2$, or $\psi_1\mathbf{U}_I\psi_2$, its semantics $r(\psi,\xi,t)$ is defined as either a maximum or minimum resilience set (see Definition~\ref{def:resv}). The only remaining case is $\psi = \neg \psi_1$. Let $x,y\in\mathbb{Z}^2$ with $x=(x_r,x_d),y=(y_r,y_d)$.  If $r(\psi_1,\xi,t)$ is a non-dominated set, then $r(\psi,\xi,t)=\{(-x_r,-x_d) \mid x \in r(\psi_1,\xi,t) \}$ is a non-dominated set also. This is because $x=_{re} y$ iff $(-x_r,-x_d)=_{re} (-y_r,-y_d)$ for any $x,y$.
\qed
\end{proof}

\section{Proof of Lemmas}
\label{appendix:lemma}

\order*
\vspace{-1ex}
\begin{proof}
Let $x,y,z \in \mathbb{Z}^2$ be different with $x=(x_r,x_d)$, $y=(y_r,y_d)$, $z= (z_r,z_d)$.

We prove that $\succ_{re}$ has irreflexivity and transivity properties.
\vspace{-2ex}
\begin{itemize}
    \item Irreflexivity: we can see that $x \succ_{re} x$ never holds because $sign(x_r)+sign(x_d) = sign(x_r)+sign(x_d)$ but $x$ does not Pareto-dominate itself.
    \item Transivity: 
    Assume that $x\succ_{re}y$ and $y\succ_{re}z$. Using the transivity of Pareto-dominance $\succ$~\cite{Deb:2001:MOU:559152},
    one of the two following statements is true:
    \begin{itemize}
        \item $sign(x_r)+sign(x_d) > sign(z_r)+sign(z_d)$.
        \item $sign(x_r)+sign(x_d) = sign(z_r)+sign(z_d)$ and $x$ Pareto-dominates $z$.
    \end{itemize}
    Therefore $x\succ_{re}z$.
\end{itemize}
\vspace{-2ex}
Irreflexivity and transitivity together imply asymmetry; so $\succ_{re}$ is a strict partial order, and the dual of a strict partial order, $\prec_{re}$, is also a strict partial order~\cite{davey2002introduction}.\qed
\end{proof}

\section{Proof of Theorems}\label{appendix:theorem}

\soundcomposite*

\begin{proof}
We first establish the following results because they are used in the main proof. Let $x,y\in \mathbb{Z}^2$ with $x=(x_r,x_d)$, $y=(y_r,y_d)$.
\begin{enumerate}[labelindent=\parindent, leftmargin=*, label=\roman*), labelsep=0.2em, widest=viii,
align=left]
    \item $x\succ_{re}\mathbf{0}$ is equivalent to $x_r,x_d\geq 0 \wedge x\not=\mathbf{0}$.
    \item $x=_{re}\mathbf{0}$ is equivalent to $x_r\,x_d< 0 \vee x=\mathbf{0}$.
    \item By~i) and~ii), $x\succ_{re}\mathbf{0} \vee x=_{re}\mathbf{0}$ is equivalent to $x_r,x_d \geq 0 \vee x_r\,x_d< 0$.
    \item If $x\succ_{re}\mathbf{0}$ and  $y\succ_{re}x \vee y=_{re}x$, then $y_r,y_d\geq 0\wedge y\not=\mathbf{0}$ holds, which, by~i), is equivalent  to  $y\succ_{re}\mathbf{0}$.
    \item If $x\succ_{re}\mathbf{0} \vee x=_{re}\mathbf{0}$ and $y=_{re}x$, then it holds that $y_r,y_d\geq 0 \vee y_r\,y_d< 0$, which by iii), is equivalent to $y\succ_{re}\mathbf{0} \vee y=_{re}\mathbf{0}$.
    \item If $x\succ_{re}\mathbf{0} \vee x=_{re}\mathbf{0}$ and $y\succ_{re}x$, then it holds that $y_r,y_d> 0 \vee y_r\,y_d< 0 \vee (y_r=0\wedge y_d>0) \vee (y_r>0 \wedge y_d=0)$ is true, thus $y\succ_{re}\mathbf{0}$ or  $y=_{re}\mathbf{0}$.\footnote{Given $x=_{re}\mathbf{0}$, the relation $y\succ_{re}x$ implies that 1) $y_r\,y_d<0\wedge y\succ x$ holds (when $x_r\,x_d<0$) or 2) $y_r,y_d> 0 \vee (y_r=0 \wedge y_d>0) \vee (y_r>0\wedge y_d=0)$ holds (when $x_r\,x_d<0$ or $x_r,x_d=0$). The disjunction of 1) and 2) is equivalent to $(y_r\,y_d<0\wedge y\succ x) \vee (y_r,y_d\geq 0 \wedge y\succ \mathbf{0})$, which implies $y_r\,y_d<0 \vee y_r,y_d\geq 0$, i.e., $y\succ_{re}\mathbf{0} \vee y=_{re}\mathbf{0}$.}
    \item Let $\varphi$ be an STL formula with $\Trec$,$\Tdur>0$; let $\psi = \srs_{\Trec,\Tdur}(\varphi)$ be an SRS atom. Since $(\xi,t)\models \psi$ is equivalent from a Boolean satisfaction standpoint to $\neg\varphi\mathbf{F}_{[0,\Trec]}\mathbf{G}_{[0,\Tdur)}\varphi$,  we can conclude that $(\xi,t)\models \psi$ holds iff $\exists\; 0 \leq t_1 \leq \Trec$ s.t.\ $\mathbf{G}_{[t,t+t_1)}\neg \varphi \wedge \mathbf{G}_{[t+t_1,t+t_1+\Tdur)}\;\varphi$ is true.
    \item Assume $\exists\,x\in\resv(\psi,\xi,t)$ s.t.\ $x\succ_{re}\mathbf{0}$. By Proposition~\ref{proposition:resvisnondominatedset}, we have  $y=_{re}x$ holds for all $y\in\resv(\psi,\xi,t)$. Then we have $y_r,y_d\geq 0 \wedge y\not=\mathbf{0}$, i.e., $y\succ_{re}\mathbf{0}$.\footnote{Indeed, for $y=_{re}x$ to hold, we need $sign(y_d)+sign(y_r)=sign(x_d)+sign(x_r)\geq 1$, which can only happen if  $y_r,y_d\geq 0 \wedge y\not=\mathbf{0}$.} Similarly, we can prove that $\exists\,x\in\resv(\psi,\xi,t)$ s.t.\ $x=_{re}\mathbf{0}$ implies that $x=_{re}\mathbf{0}$ holds for all $x\in\resv(\psi,\xi,t)$.
\end{enumerate}

We prove statements~1) and~3) of Theorem~\ref{theorem:soundness}. Statements~2) and~4) can be proven analogously. The proofs follow the inductive structure used in the definition of ReSV. We first prove statement~1) of Theorem~\ref{theorem:soundness} as follows.\\[0.75ex]
1) $\exists\;x\in\resv(\psi,\xi,t)$ s.t.\ $x\succ_{re}\mathbf{0} \Longrightarrow (\xi,t)\models \psi$
\begin{itemize}
    \item $\psi = \srs_{\Trec,\Tdur}(\varphi)$.   By Definition~\ref{def:resv}, $\resv(\psi,\xi,t)$ is a singleton set; $x\succ_{re}\mathbf{0}$ implies $x_r,x_d\geq 0\wedge x\not=\mathbf{0}$, which then implies $x_r,x_d\geq 0$.  By transitivity of implication, we can equivalently prove that $x_r,x_d\geq 0$ implies $(\xi,t)\models \psi$.  
    By Eq.~\eqref{eq:resv}, $x_r\geq 0$ implies $0\leq t_{rec}(\varphi,\xi,t)\leq \Trec$. Let $t_1=t_{rec}(\varphi,\xi,t)$. 
    By Eq.~\eqref{eq:t_rec},  we have that $\mathbf{G}_{[t,t+t_1)}\neg\varphi$ holds. By Eq.~\eqref{eq:resv}, 
    $x_d\geq 0$ implies  $t_{dur}(\varphi,\xi,t+t_1)\geq \Tdur$, and thus, by  Eq.~\eqref{eq:t_dur}, $\mathbf{G}_{[t+t_1,t+t_1+\Tdur)}\;\varphi$ holds.  Therefore $x_r,x_d\geq 0$ implies $\exists\,0 \leq t_1 \leq \Trec$ s.t.\ $\mathbf{G}_{[t,t+t_1)}\neg \varphi \wedge \mathbf{G}_{[t+t_1,t+t_1+\Tdur)}\;\varphi$; by result~vii), $(\xi,t)\models \psi$.
    \\[-1.75ex]
    \item $\psi = \neg \psi_1$. By Definition~\ref{def:resv}, $\resv(\psi,\xi,t) = \{(-x_r,-x_d): x \in \resv(\psi_1,\xi,t)\}$. Thus, by the implicant in~i), we have that $\exists\,x \in\resv(\psi_1,\xi,t)$ s.t.\ $x\prec_{re}\mathbf{0}$ holds.
    From the induction hypothesis, we have $(\xi,t)\models\neg\psi_1$; thus $(\xi,t)\models \psi$. 
    \\[-1.75ex]
    \item $\psi = \psi_1\wedge\psi_2$.  By Definition~\ref{def:resv}, $\resv(\psi_1\wedge\psi_2,\xi,t) = {\min}_{re} (\resv(\psi_1,\xi,t)\;\cup\; \resv(\psi_2,\xi,t))$.  By Definition~\ref{def:maxmin_res_set} and result~iv), we have $x'\succ_{re}x$ or  $x'=_{re}x$, thus $x'\succ_{re}\mathbf{0}$ for all $x'\in\resv(\psi_1,\xi,t)\cup\resv(\psi_2,\xi,t)$.  From the induction hypothesis, we have $(\xi,t)\models \psi_1$ and $(\xi,t)\models \psi_2$; 
    thus $(\xi,t)\models \psi_1\wedge\psi_2$.
    \\[-1.75ex]
    \item $\psi = \psi_1\vee\psi_2$.  By Definition~\ref{def:resv}, $\resv(\psi_1\vee\psi_2,\xi,t) = {\max}_{re} (\resv(\psi_1,\xi,t)\;\cup\; \resv(\psi_2,\xi,t))$.  By Definition~\ref{def:maxmin_res_set}, we have $x \in \resv(\psi_1,\xi,t)\;\cup\; \resv(\psi_2,\xi,t)$; therefore $x \in \resv(\psi_1,\xi,t)$ or $x \in \resv(\psi_2,\xi,t)$.  From the induction hypothesis, we have $(\xi,t)\models \psi_1$ or $(\xi,t)\models \psi_2$; thus  $(\xi,t)\models \psi_1 \vee \psi_2$.
    \\[-1.75ex]
    \item $\psi =\psi_1\mathbf{U}_I\psi_2$. By Definition~\ref{def:resv},  $\resv(\psi_1\mathbf{U}_I\psi_2,\xi,t) = {\max}_{re} \cup_{t'\in t+I}{\min}_{re}( \resv(\psi_2,\xi,t')\allowbreak\cup\,{\min}_{re}\cup_{t''\in[t, t+t')}\resv(\psi_1,\xi,t''))$.  By  Definition~\ref{def:maxmin_res_set} and result~iv), we have that $\exists\,t_1\in t+I$ s.t.\ $x_1\succ_{re}\mathbf{0}$ for all $x_1\in \resv(\psi_2,\xi,t_1)\cup\,{\min}_{re}\cup_{t''\in[t, t+t_1)}\resv(\psi_1,\xi,t'')$.  
    From the induction hypothesis, we have $(\xi,t_1)\models \psi_2$.  Similarly, we have $x_2\succ_{re}\mathbf{0}$ for all $x_2\in \resv(\psi_1,\xi,t'')$ and all $t''\in[t,t+t_1)$.
    From the induction hypothesis, $\forall\,t''\in[t,t+t_1)$ $(\xi,t'')\models \psi_1$. Together with $(\xi,t_1)\models \psi_2$ we can conclude that $(\xi,t)\models \psi_1\mathbf{U}_I\psi_2$.
\end{itemize}

We then prove statement~3) of Theorem~\ref{theorem:soundness} as follows.\\[0.75ex]
\noindent 3) $(\xi,t)\models \psi \Longrightarrow \exists\;x\in\resv(\psi,\xi,t)$ s.t.\ $x\succ_{re}\mathbf{0}$ or $x=_{re}\mathbf{0}$\\[-3ex]
\begin{itemize}
    \item $\psi = \srs_{\Trec,\Tdur}(\varphi)$.  By result~(vii),  $(\xi,t)\models \psi$ implies that $\exists\,0 \leq t_1 \leq \Trec$ s.t.\ $\mathbf{G}_{[t,t+t_1)}\neg \varphi \wedge \mathbf{G}_{[t+t_1,t+t_1+\Tdur)}\;\varphi$ is true, which results in $x_r,x_d\geq 0$ by Eqs.~\eqref{eq:t_rec} and~\eqref{eq:t_dur}. This is equivalent to $x\succ_{re}\mathbf{0}$ (when at least one of $x_r,x_d\geq 0$ holds strictly) or $x=_{re}\mathbf{0}$ (when $x_r=x_d= 0$). 
    \\[-1.75ex]
    \item $\psi = \neg \psi_1$.  From the induction hypothesis, $(\xi,t)\models \neg\psi_1$ implies that $\exists\,x\in \resv(\neg \psi_1,\xi,t)$ s.t.\ $x\succ_{re}\mathbf{0}$ or $x=_{re}\mathbf{0}$. Then, by Definition~\ref{def:resv}, we have that $\exists\,x\in \resv(\psi_1,\xi,t)$ s.t.\ $x\prec_{re}\mathbf{0}$ or $x=_{re}\mathbf{0}$
    (i.e., $x_r,x_d\leq 0 \vee x_r\,x_d< 0$ is true).  By Definition~\ref{def:resv}, $(-x_r,-x_d)\in\resv(\psi,\xi,t)$; thus $(-x_r,-x_d)\succ_{re}\mathbf{0}$ (when at least one of $x_r,x_d\leq 0$ holds strictly) or $(-x_r,-x_d)=_{re}\mathbf{0}$ (when $x_r=x_d = 0$ or $x_r\,x_d< 0$ is true).
    \\[-1.75ex]
    \item $\psi = \psi_1\wedge\psi_2$.  This is equivalent to $(\xi,t)\models \psi_1 \wedge (\xi,t)\models \psi_2$ holds.  From induction hypothesis, we have  $\exists\,x_1\in\resv(\psi_1,\xi,t)$ s.t.\ $x_1\succ_{re}\mathbf{0}$ or $x_1=_{re}\mathbf{0}$; and $\exists\,x_2\in\resv(\psi_2,\xi,t)$ s.t.\ $x_2\succ_{re}\mathbf{0}$ or $x_2=_{re}\mathbf{0}$. By result~viii), we have $x\succ_{re}\mathbf{0}$ or $x=_{re}\mathbf{0}$ holds for all $x\in\resv(\psi_1,\xi,t)$ and $x\in\resv(\psi_2,\xi,t)$.  By Definition~\ref{def:maxmin_res_set},  we have $x\succ_{re}\mathbf{0}$ or $x=_{re}\mathbf{0}$ holds for all $x\in\min_{re}(\resv(\psi_1,\xi,t)\cup \resv(\psi_2,\xi,t))$, i.e., by Definition~\ref{def:resv}, for all $x \in \resv(\psi,\xi,t)$.
    \\[-1.75ex]
    \item $\psi = \psi_1\vee\psi_2$.  By definition, $(\xi,t)\models \psi_1\vee\psi_2$ holds implies that $(\xi,t)\models \psi_1 \vee (\xi,t)\models \psi_2$ holds.  From induction hypothesis, we have  $\exists\,x_1\in\resv(\psi_1,\xi,t)$ s.t.\ $x_1\succ_{re}\mathbf{0}$ or $x_1=_{re}\mathbf{0}$; or  $\exists\,x_2\in\resv(\psi_2,\xi,t)$ s.t.\ $x_2\succ_{re}\mathbf{0}$ or $x_2=_{re}\mathbf{0}$.  Let w.l.o.g. the former holds; then we have  $x_1\in\resv(\psi_1,\xi,t)\cup\resv(\psi_2,\xi,t)$.  By Definition~\ref{def:maxmin_res_set}, $x\succ_{re}x_1$ or $x=_{re}x_1$ for all $x\in\max_{re}(\resv(\psi_1,\xi,t)\cup\resv(\psi_2,\xi,t))$.  By results~v) and~vi), $x\succ_{re}\mathbf{0}$ or $x=_{re}\mathbf{0}$. By Definition~\ref{def:resv}, $\resv(\psi,\xi,t)=\max_{re}(\resv(\psi_1,\xi,t)\cup \resv(\psi_2,\xi,t))$.    
    \\[-1.75ex]
    \item $\psi =\psi_1\mathbf{U}_I\psi_2$.  By definition, $(\xi,t)\models\psi_1\mathbf{U}_I\psi_2$ holds implies that $\exists\,t'\in t+I$ s.t.\ $(\xi,t')\models\psi_2 \wedge \forall\,t''\in [t,t'), (\xi,t'')\models\psi_1$.   From induction hypothesis, we have $\exists\,x_2\in\resv(\psi_2,\xi,t')$ s.t.\ $x_2\succ_{re}\mathbf{0}$ or $x_2=_{re}\mathbf{0}$; and $\exists\,x_1\in\resv(\psi_1,\xi,t'')$ s.t.\ $x_1\succ_{re}\mathbf{0}$ or $x_1=_{re}\mathbf{0}$ for all $t''\in [t,t')$.  By result~(viii), we have that $x\succ_{re}\mathbf{0}$ or $x=_{re}\mathbf{0}$ holds for all $x\in \resv(\psi_2,\xi,t')$ and  for all $x\in\resv(\psi_1,\xi,t'')$, for all $t''\in [t,t')$.  By Definition~\ref{def:maxmin_res_set}, we evince that $\exists\,t'\in t+I$ s.t.\ $x\succ_{re}\mathbf{0}$ or $x=_{re}\mathbf{0}$ holds for all $x\in {\min}_{re}( \resv(\psi_2,\xi,t')\cup{\min}_{re}\cup_{t''\in[t, t+t')}\resv(\psi_1,\xi,t''))$. Note that the latter is a subset of $\cup_{t'\in t+I}\min_{re}( \resv(\psi_2,\xi,t')\cup{\min}_{re}\cup_{t''\in[t, t+t')}\resv(\psi_1,\xi,t''))$, which implies, together with  Definition~\ref{def:maxmin_res_set} and results~v) and~vi), that $x\succ_{re}\mathbf{0}$ or $x=_{re}\mathbf{0}$ holds for all $x\in {\max}_{re}\cup_{t'\in t+I}\min_{re}( \resv(\psi_2,\xi,t')\cup{\min}_{re}\cup_{t''\in[t, t+t')}\resv(\psi_1,\xi,t''))$, which is $\resv(\psi,\xi,t)$ by Definition~\ref{def:resv}.
\end{itemize}
This completes the proof.\qed
\end{proof}

\section{Reynolds flocking model}\label{appendix:flockingmodel}

We use Reynolds rule-based model~\cite{reynolds1987flocks,reynolds1999steering} to describe the dynamics of flocking.  There are three steering behaviors used to determine how the agents, which are called boids, maneuver based on the positions and velocities of their nearby flockmates.   \emph{Separation} causes a boid to move away from its nearby flockmates in crowded situations.  \textit{Cohesion} drives a boid to move towards the average location of its nearby flockmates.  \textit{Alignment} causes a boid to steer in the direction of the average heading of its nearby flockmates. 

\begin{figure}[h]
	\centering
	\subfloat[Separation.]{\includegraphics[width=.3\linewidth]{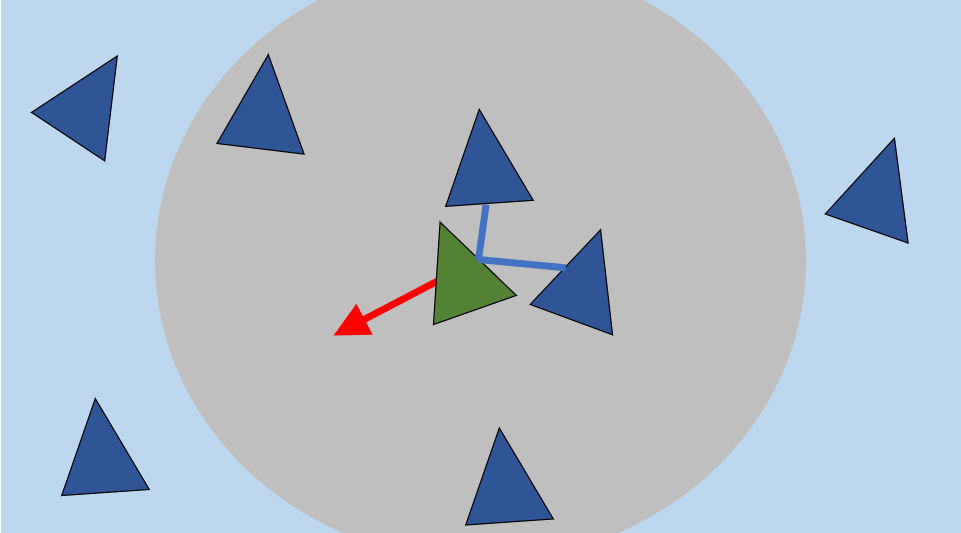}}\hfill
	\subfloat[Cohesion.]{\includegraphics[width=.3\linewidth]{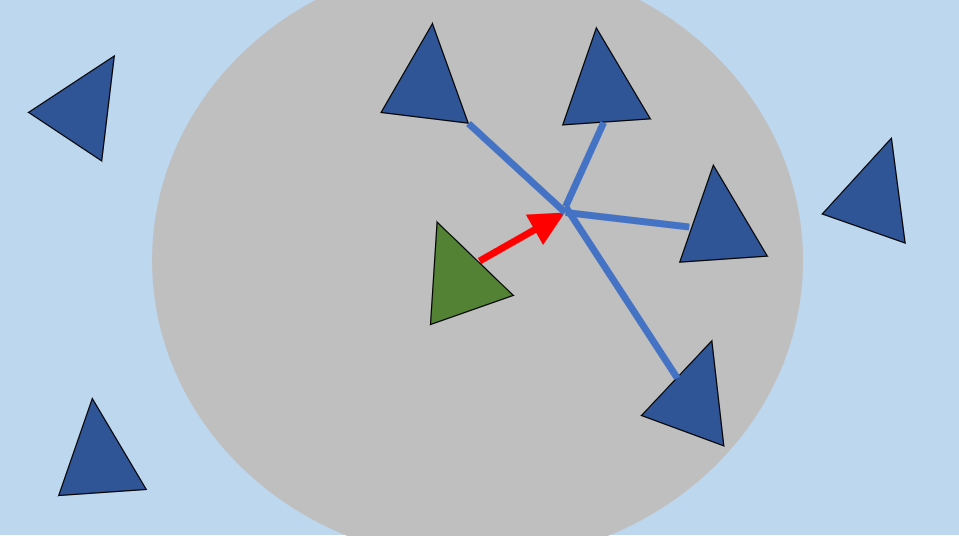}}\hfill
	\subfloat[Alignment.]{\includegraphics[width=.3\linewidth]{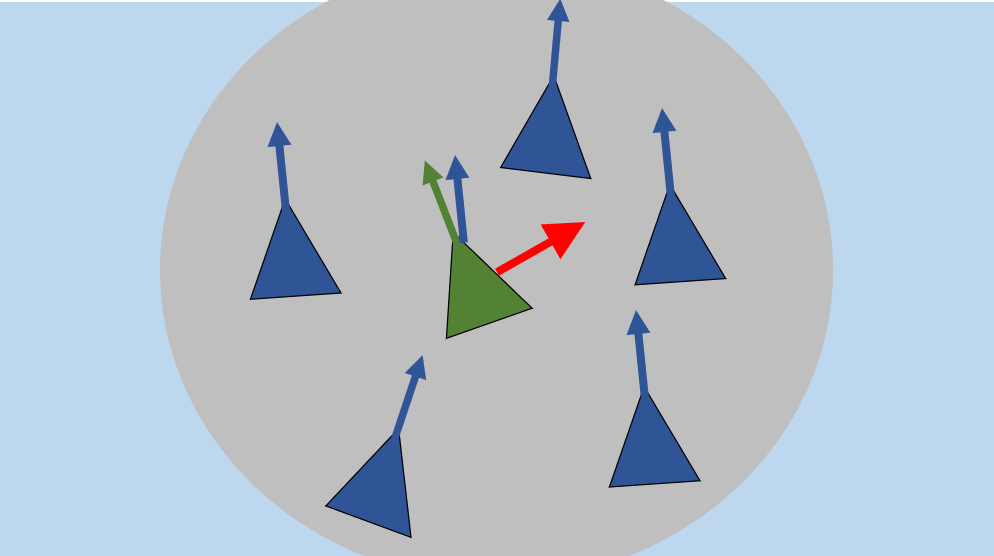}}
	\caption{Reynolds model rules.  Gray circles represent the interaction region of the green boid.}
	\label{fig:flock_rule}
\end{figure}

Figure~\ref{fig:flock_rule} illustrates these rules.  We consider a set of boids $\mathcal{B}= \{1,\ldots,n\}$ in $m$-dimensional space that move in discrete time as follows:
\begin{align*}
    \mathbf{x}(k+1) &= \mathbf{x}(k) + dt \cdot v_i(k) + d_i(k)\\
    v_i(k+1) &= v_i(k) + dt \cdot a_i(k)
\end{align*}
\noindent where $\mathbf{x}(k) = [x_1(k),\ldots,x_n(k)]\in\mathbb{R}^{m\cdot n}$ is the configuration of all boids, $\mathbf{v}(k) = [v_1(k),\ldots,v_n(k)]\in\mathbb{R}^{m\cdot n}$ and $\mathbf{a}(k) = [a_1(k),\ldots,a_n(k)]\in\mathbb{R}^{m\cdot n}$.  Vectors $x_i(k)$, $v_i(k)$, $a_i(k)$, $d_i(k)$ $\in \mathbb{R}^m$ are the position, velocity, acceleration, and random displacement of boid $i$ at step $k$.  Time interval $dt$ is the duration of a time step.

\end{document}